\newtheorem{theorem}{Theorem}[section]
\newtheorem{lemma}[theorem]{Lemma}
\newtheorem{proposition}[theorem]{Proposition}
\newtheorem{corollary}[theorem]{Corollary}
\newtheorem{definition}{Definition}[section]
\newtheorem{example}{Example}
\newtheorem{remark}{Remark}
\def\F{{\mathbb F}}
\def\R{{\mathbb R}}
\def\cC{{\mathcal C}}
\def\cD{{\mathcal D}}
\def\cL{{\mathcal L}}
\def\cP{{\mathcal P}}
\begin{document}

\thispagestyle{empty}

\title{A Graph Theoretical Analysis of Low-Power Coding Schemes for One-Hop Networks}

\author{Eimear~Byrne,
       and~Akiko~Manada,~\IEEEmembership{Member,~IEEE}

\thanks{E. Byrne is with School of Mathematics and Statistics, 
University College Dublin, Belfield, Dublin 4, Dublin, IRELAND,
email: ebyrne@ucd.ie.

A. Manada is with Graduate School of Informatics and Engineering, The Univ of Electro-Communications, Chofu, Tokyo, 182-8585 JAPAN, e-mail: amanada@uec.ac.jp}
\thanks{This work was supported in part by Science Foundation Ireland (SFI) grant numbers 06/MI/006 and JSPS KAKENHI Grant Number 25870228.}
\thanks{This paper was presented in part at 2011 IEEE International Symposium on Information Theory  (ISIT) \cite{BMMP:11}, and 2012 International Symposium on Communications and Information Technologies (ISCIT) \cite{BM:12}.}
\thanks{Manuscript received April 19, 2016}}

\markboth{IEEE Transactions on Information Theory}
{Byrne and Manada \MakeLowercase{}:A Graph Theoretical Analysis of Low-Power Coding Schemes for One-Hop Networks}

\maketitle

\begin{abstract}  
Coding schemes with extremely low computational complexity 
are required for particular applications, such as  
wireless body area networks, in which case both very high data accuracy and very low power-consumption are required features.  
In this paper, coding schemes arising from incidence matrices of graphs are proposed. An analysis of the resilience of
such codes to erasures is given using graph theoretical arguments; {\em decodability} of a graph
is characterized in terms of the rank of its incidence matrix. Bounds are given on the number of decodable subgraphs of a graph and the number of 
edges that must be deleted in order to arrive at an undecodable subgraph.  
Algorithms to construct codes that are optimal with respect to these bounds are presented.
\end{abstract}

\begin{IEEEkeywords}
low-power coding, energy-efficiency, one-hop network, graph representation, incidence matrix, erasures, decodable/undecodable graph, decoding probability, decoding cut 
\end{IEEEkeywords}


\section{Introduction\label{intro}} 
\IEEEPARstart{N}{etwork} coding \cite{ACLY:00, LYC:03} is by now a well established area offering demonstrated advantages
over routing in communications networks such as increased throughput and reduced latency.
Power consumption is a major concern for a number of applications.
For example, \emph{wireless body area networks} (WBANs) \cite{CGVCL:10, LYZ:10}, which are designed
to give reliable unobtrusive support for the monitoring of person's physiological data, 
favour ultra low-power coding schemes for communication between miniature sensors. 
At the same time, the number of re-transmissions requested due to errors should be 
minimized, since these are very costly from a power and/or memory perspective (see \cite{MAL:14} and the references therein).
Therefore, a major consideration in the design of any low-power coding scheme is that it be robust to packet loss, 
without incurring a large computational cost.

By a low-power coding scheme we mean one whose encoding functions have few arguments, and so in the linear case
can be represented by a sparse matrix. 
In this paper, we discuss measures of robustness of 
coding schemes when only the summation of two packets is allowed. 
The analysis for the case that packets are vectors over 
$GF(2)$ has been considered in \cite{BMMP:11, MP:09}. These schemes have the practical advantage of very low complexity encoding and decoding \cite{YMM:13}. Moreover, they can be identified with graphs and studied using that theory. In this paper, we extend these results to the more general case of coding schemes over arbitrary finite fields. The decoding criterion for a coding scheme over $GF(q)$ with $q$ odd is more general than for one over $GF(q)$ with $q$ even.

The rest of the paper is organized as follows. We present preliminaries in 
Section~\ref{background}, and identify a graph as a code via its incidence matrix. 
In Section~\ref{decodable}, we state a necessary and sufficient condition for a given coding scheme to 
be able to deliver all data at the terminal (in spite of packet erasures)
in terms of its corresponding graph representation, in which case we call the graph decodable.
In Section~\ref{decodable_probability}, we give a definition of {\em decoding probability} $P_G(y,z)$ for a given coding scheme as the evaluation of transmission success at 
$y=p,z=1-p$, where $p$ is the reliability of a packet. 
In particular, we derive a recurrence relation on $P_G(y,z)$ for certain coding schemes, which is similar to that for the well-known chromatic polynomial in graph theory. 
We then discuss in Section~\ref{dec_cut} the minimal size $b_G$ of a \emph{decoding cut}
in order to find a coding scheme with high decoding probability, 
and in Section~\ref{undec_graph} lower bounds on the number of undecodable graphs. 
Section~\ref{candidate} proposes some algorithms to produce 
suitable coding schemes, and Section~\ref{comparison} shows comparisons between various coding schemes.
Final remarks can be read in Section~\ref{conclusion}.

\section{Preliminaries\label{background}}
For the framework we consider, a system consists of the sender $S$, a terminal $T$ and a set of relays.  
The sender $S$ wishes to send $n$ data packets $p_1, p_2, \ldots, p_n $ in $GF(q)^{\ell}$ to the terminal $T$, via the relays. 
Throughout the paper, we always assume that $n\ge 2$. 
%
%
At the relays, a total of $m$ encodings 
$$f_{1}(p_1, p_2, \ldots,p_n), f_{2}(p_1, p_2, \ldots,p_n), \ldots, f_{m}(p_1, p_2, \ldots,p_n)$$ are computed and transmitted to $T$.
The \emph{redundancy} $r=m/n$ is the ratio of the number of packets sent to $T$ to the number of original packets. 
A (linear) {\em  coding scheme} $\cC$ for the system is a collection of $GF(q)$-linear vectorial functions corresponding to packet encodings at relays; that is, $\cC=\{f_1, f_2, \ldots, f_m\}$. 
We assume for the remainder that
each encoded packet $f_{i}(p_1, p_2, \ldots,p_n)$, $1\le i \le m$,  has the form $p_j$ or $p_j+p_{k}$ with $j\not=k$. 
This scheme offers both low encoding and decoding complexity.

We next present some preliminaries on graphs (see \cite{W:96,D:10} for further reading).
Let $G=(V,E)$ be a finite (multi)-graph with vertex set $V=V(G)$ and edge multi-set $E=E(G)\subset V\times V$. 
An edge of $G$, $e=(x,y) \in V \times V$ is said to have initial vertex $x$ and end vertex $y$. Multi-edges may be distinguished by an edge labelling of $G$. 

A \emph{walk} starting at vertex $x$ and terminating at a vertex $y$ (or an $xy$-walk) is a sequence 
$\pi=x, e_{1},v_{1},e_{2},v_{2},\ldots,e_{\ell}, y$ such that each $e_{k}$ is an edge of $G$ with $e_{1}$ originating at $x$, $e_{\ell}$ ending at $y$ 
and such that the initial vertex $v_{k}$ of $e_{k+1}$ is the end vertex $v_{k}$ of $e_{k}$ for $k=1,...,\ell-1.$
If 
the vertices of walk $\pi$ are all distinct it is called a \emph{path}. 
The number $\ell$ of edges appearing in a walk is referred to as its \emph{length}.
We write $x,v_{1},\ldots,v_{\ell-1}, y$ to denote any $xy$-path of the form $x, e_{1},v_{1},e_{2},v_{2}, \ldots,e_{\ell}, y$. If $G$ is simple (has no multiple edges)  
then $\pi$ is uniquely determined by the sequence of vertices $x,v_{1}, \ldots ,v_{\ell-1}, y$. A walk or path is called a \emph{closed walk} or \emph{cycle}, respectively, if its initial vertex is the same as its terminal vertex.
It is easy to see that an $xy$-path exists in $G$ if an $xy$-walk exists and that a closed walk of odd length implies the existence of a cycle of odd length. 
We say $x$ is \emph{connected to} $y$ in $G$ when a walk from $x$ to $y$ exists in $G$.  Graph $G$ is called \emph{connected} if each pair of vertices are connected, otherwise $G$ is called \emph{disconnected}. A \emph{connected component} (or simply \emph{component}) of $G$ is a maximal connected subgraph of $G$, 
where a subgraph $H$ of $G$ is a graph such that $V(H) \subset V(G)$ and $E(H)\subset E(G)$ hold.


The \emph{adjacency matrix} $A=A_G$ of $G$ is the
$|V|\times |V|$ integer matrix whose $(i,j)$-entry 
is the number of edges with initial vertex $x$ and end vertex $y$. 
We write $\Omega(G)$ to denote the maximum multiplicity of any edge joining a pair of vertices of $G$; that 
is, the maximum entry in $A$. The \emph{incidence matrix} $B=B_G$ of $G$ is the $|E|\times |V|$ 0-1 matrix whose $(i,j)$-entry is 1 if edge $e_i$ is incident with vertex $v_j$
with respect to some labelling of the vertices and edges of $G$. 
Observe that the $(i,j)$-entry of $A^\ell$ corresponds to the 
number of walks of length $\ell$ from $v_i$ to $v_j$.



We write $d_L(v)$ to denote the number of loops (length-1 cycles) incident with a given vertex $v$, and we let 
$\Delta_L(G):=\max\{ d_L(v) : v \in V\}$.
We denote by $L_G=\sum_{v\in V}d_L(v)$  the number of loops  of 
$G$. 
We define the \emph{incidence degree} of a vertex $v$, expressed $d_{I}(v)$, as  
the number of edges incident with $v$. Each loop at $v$ contributes a count of one to $d_I(v)$.
We define $\delta _{I}(G):=\min \{d_{I}(v): v \in V\}$. 
The sum of the incidence degrees $S_I(G)$ satisfies $S_I(G):=\sum_{v \in V}d_I(v) = 2|E|-L_G$.

Given a connected graph $G$, the \emph{edge-connectivity} $\lambda(G)$ of $G$ is the smallest number of edges such that the resulting graph formed by deleting those edges is disconnected. 
Observe that $\lambda(G) \leq \delta _I(G)$ since deleting all edges attached to a vertex $v$ with incidence degree $d_{I}(v)= \delta _{I}(G)$
makes $v$ isolated. 

A graph $G=(V,E)$ is called a \emph{bipartite graph} or \emph{bi-colourable} if $V$ can be partitioned into two sets $U$ and $W=V \backslash U$ such that
every edge of $G$ is incident with exactly one member of $U$ and one member of $W$; that is, no two vertices in the same set 
are connected by an edge. (Note that a graph with loops is not bipartite, nor indeed colourable.)
There are several characterizations of bipartite graphs. One that is relevant to this paper is K\"{o}nig's well-known result (see \cite[Chapter 1]{W:96}).

\begin{theorem}\label{thbip}
    A graph is bipartite if and only if it has no cycle of odd length. 
\end{theorem}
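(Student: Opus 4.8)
The plan is to prove both directions by reasoning about walks and cycles in each connected component; since bipartiteness and the absence of odd cycles are both properties that hold for $G$ exactly when they hold for every component, I may assume $G$ is connected.

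For the forward direction, suppose $G$ is bipartite with vertex partition $V=U\cup W$. Any cycle $v_0,v_1,\ldots,v_\ell=v_0$ must alternate between $U$ and $W$ at each step, since every edge joins a vertex of $U$ to a vertex of $W$. Hence $v_0\in U$ forces $v_i\in U$ for even $i$ and $v_i\in W$ for odd $i$; since $v_\ell=v_0\in U$, the index $\ell$ must be even. So $G$ has no odd cycle. (This also rules out loops, consistent with the parenthetical remark preceding the theorem.)

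For the converse, suppose $G$ (connected) has no cycle of odd length; I must produce a bipartition. Fix a vertex $x_0$ and define, for each vertex $v$, the quantity $d(v)$ to be the minimum length of an $x_0v$-walk — this exists because $G$ is connected, and equals the length of a shortest $x_0v$-path. Set $U=\{v: d(v)\text{ even}\}$ and $W=\{v:d(v)\text{ odd}\}$. The claim is that every edge of $G$ joins a vertex of $U$ to a vertex of $W$. Given an edge $e=(u,v)$, appending $e$ to a shortest $x_0u$-walk gives an $x_0v$-walk of length $d(u)+1$, so $d(v)\le d(u)+1$, and symmetrically $d(u)\le d(v)+1$; thus $|d(u)-d(v)|\le 1$. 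It remains to exclude $d(u)=d(v)$. If $d(u)=d(v)$, then concatenating a shortest $x_0u$-path, the edge $e$, and the reverse of a shortest $x_0v$-path yields a closed $x_0x_0$-walk of odd length $d(u)+1+d(v)=2d(u)+1$. By the fact noted in the preliminaries, a closed walk of odd length implies the existence of an odd cycle in $G$, contradicting our hypothesis. Hence $d(u)\neq d(v)$, so $\{d(u),d(v)\}=\{d(u),d(u)+1\}$ has one even and one odd member, meaning $e$ joins $U$ to $W$. Therefore $G$ is bipartite.

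The one genuine subtlety — the main obstacle — is the step showing $d(u)=d(v)$ forces an odd closed walk and hence an odd cycle; this is where the hypothesis is actually used, and it relies precisely on the preliminary observation that an odd closed walk contains an odd cycle (a closed walk need not itself be a cycle, since vertices may repeat). Everything else is bookkeeping with walk lengths and parity.
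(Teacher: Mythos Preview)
Your proof is correct and is the standard argument for K\"onig's theorem. However, the paper does not actually supply a proof of Theorem~\ref{thbip}: it is stated as a well-known result with a reference to \cite[Chapter~1]{W:96}, so there is nothing in the paper to compare your argument against. Your write-up is self-contained and uses exactly the tool the preliminaries make available (that an odd closed walk forces an odd cycle), so it would slot in cleanly if a proof were desired.
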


We now relate identify a coding scheme with a multi-graph. 
Consider the case that the sender $S$ sends $n$ packets $p_1, p_2, \ldots, p_n\in GF(q)^{\ell}$ to the terminal $T$ with redundancy $r$. 
Given a coding scheme
 $\cC=\{f_1,f_2, \ldots,f_{m}\}$
we generate a \emph{multi-graph representation} $G=G_{\cC}$ for $\cC$ as follows.
\begin{enumerate} 
\item $G$ has as vertices $p_1, p_2, \ldots, p_n$.
\item For $j\not =j'$, $(p_j,p_{t})$ is an edge of $G$ if $f_i(p_1, p_2, \ldots,p_n)=p_j+ p_{t}$ for some $f_i$ (\emph{i.e.}, if $p_j+ p_{t}$ is sent to the terminal $T$).
\item $G$ has a loop at $p_j$ if $f_i(p_1,p_2, \ldots,p_n)=p_j$ for some $f_i \in \cC$ (\emph{i.e.}, if $p_j$ itself is sent to $T$).
\end{enumerate}

An example of a coding scheme $\cC$ and its corresponding graph 
representation is given in Table~\ref{interencoding_table} and Figure~\ref{sys_fig}. 
The erasure of packets during a transmission is identified with deletions of corresponding edges in $G$. 
Also, for a case with $n$ packets and redundancy $r$, any graph representation of a corresponding coding scheme must have $n$ vertices and $m=rn$ edges.

\begin{table}[htb]
\begin{center}
\scriptsize
\begin{tabular}{|c|c|c|c|c|c|}
\hline
  \multicolumn{6}{|c|}{Encoding at the sender $S$} \\
\hline
$p_1$    & $p_2$ & $p_3 +  p_4 $ &  $p_4 +p_5  $ &  $p_5 +p_6 $ &  $p_6 +  p_1$ \\
\hline
$p_4$    & $p_5$ & $p_6 +  p_7$ &  $p_7 + p_8$ &  $p_8 + p_9$ &  $p_9 + p_4$ \\
\hline
$p_7$    & $p_8$ & $p_9 + p_{10}$ &  $p_{10}+ p_{11}$ & $p_{11} +  p_{12}$ & $p_{12} + p_{7}$\\
\hline
$p_{10}$ & $p_{11}$ & $p_{12}+ p_{1} $ &  $p_1 +p_{2}$  & $p_2 +p_{3} $ & $p_3 +p_{10}$ \\
\hline
\end{tabular}
\end{center}
\caption{An example of a coding scheme $\cC$ \label{interencoding_table}}
\end{table}


\begin{figure}[htb]
\begin{center}
\includegraphics[width=65mm]{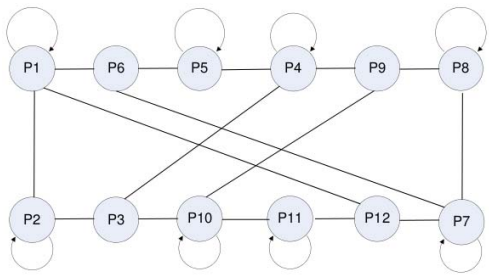}
\caption
{The graph representation $G=G_{\cC}$ of coding scheme $\cC$ in Table~\ref{interencoding_table} \label{sys_fig}}
\end{center}
\end{figure}
\mbox{}\\[-25pt]

\section{Decodable Graphs\label{decodable}}
We provide a necessary and sufficient condition on the graph representation of a coding scheme for full packet retrieval at the terminal $T$
extending \cite[Theorem III.1]{BMMP:11}. 
For the remainder of this paper, when a case with $n$ packets and redundancy $r$ is given,
we let $\cC=\{f_1,f_2,\ldots, f_{rn}\}$ denote a coding scheme over $GF(q)$ for the system, 
where each $f_i$ is an encoding of packets $p_1,p_2, \ldots,p_n\in GF(q)^\ell$, and we let $G=G_{\cC}$ denote the graph representation of $\cC$.
$H$ will denote a subgraph of $G$ formed by deleting a corresponding edge of $G$ for every erasure occurring in a given transmission.

\begin{definition}
    $H$ is called decodable if the linear system of equations corresponding to the edges of $H$ can be solved for unique $p_1,p_2, \ldots,p_n$.
    Otherwise $H$ is called undecodable.
\end{definition}

Throughout this paper, we denote by $\cD(n,m)$ the set of decodable graphs on $n$ vertices and $m$ edges. We always set $m\geq n$ since it is impossible to uniquely retrieve all $n$ packets from 
$m$ received packets if $m<n$. One criterion for decodability is given by the following. \\[-18pt]

\begin{lemma}\label{lemconn} 
    Suppose that the terminal $T$  has retrieved the packet $p_i$ for some $i \in \{1,2, \ldots,n\}$. 
    Then for every $p_j$ connected to $p_i$ in $H$, the packet $p_j$
    can be retrieved by $T$.  
\end{lemma}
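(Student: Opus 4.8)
The plan is to prove the statement by induction along a path from $p_i$ to $p_j$ in $H$, feeding in at each step the single linear equation attached to one edge of that path. Since $p_j$ is connected to $p_i$ in $H$, there is a $p_ip_j$-walk in $H$, and by the observation made just before Theorem~\ref{thbip} there is then a $p_ip_j$-path, say $p_i = v_0, v_1, \ldots, v_\ell = p_j$, where the $v_k$ are distinct vertices of $H$ and each consecutive pair $v_{k-1}, v_k$ is joined by some edge $e_k \in E(H)$. By the construction of $G$ (and hence of its subgraph $H$), the edge $e_k$ corresponds to a received encoded packet of the form $v_{k-1} + v_k$; I denote its known value by $c_k \in GF(q)^\ell$.

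The induction then runs as follows: I claim that $v_k$ can be retrieved by $T$ for every $k \in \{0, 1, \ldots, \ell\}$. For $k = 0$ this is exactly the hypothesis that $p_i = v_0$ has been retrieved. Assuming $v_{k-1}$ has been retrieved for some $k \geq 1$, the terminal $T$ knows both the received packet $c_k$ and $v_{k-1}$, so it computes $v_k = c_k - v_{k-1}$ in $GF(q)^\ell$, and hence $v_k$ is retrieved. Taking $k = \ell$ gives that $p_j = v_\ell$ is retrieved, which is the assertion of the lemma.

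I do not expect any genuine obstacle here; the one point that warrants a word of care is that "connected" is defined via walks rather than paths, which is why I first pass to a path (equivalently, one could run exactly the same computation directly along a walk, allowing repeated vertices, with no change to the argument). It is also worth noting that over $GF(q)$ with $q$ odd this step uses subtraction rather than addition, but this is immaterial and in fact merely recovers, with a slight simplification, the reasoning used for the $GF(2)$ case in \cite{BMMP:11}.
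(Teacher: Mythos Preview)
Your proof is correct and follows essentially the same approach as the paper: both pick a path from $p_i$ to $p_j$ and observe that the edge equations along it determine all vertices once $p_i$ is known. The paper phrases this as the corresponding linear system having full rank, while you carry out the back-substitution explicitly by induction; these are two ways of saying the same thing.
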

\begin{proof}
Let $p_j$ be connected to $p_i$ in $H$, say by a path $\pi: p_i=p_{i_1},p_{i_2},\ldots,p_{i_h}=p_j$.
    This corresponds to the linear system of $h$ equations 
    $g_1=p_{i_1},g_2=p_{i_1}+p_{i_2},\ldots,g_{h}=p_{i_{h-1}}+p_{i_h},$ which has rank $h$
    and can be solved at $T$.    
\end{proof}

\begin{lemma}\label{lemdh}
   $H$ is decodable if and only if its incidence matrix $B_H$ has rank $n$, 
   considered as a matrix over $GF(q)$. 
\end{lemma}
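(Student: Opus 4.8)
The statement is an ``if and only if'', so I would split it into two implications and handle each direction separately, keeping in mind the precise meaning of \emph{decodable}: the linear system associated to the edges of $H$ (one equation $p_j$ per loop at $p_j$, one equation $p_j+p_t$ per edge $(p_j,p_t)$) has a \emph{unique} solution for $(p_1,\ldots,p_n)$. The coefficient matrix of this system is exactly the incidence matrix $B_H$, acting on the column vector $(p_1,\ldots,p_n)^{\mathsf T}$, with the received packets as the right-hand side. So decodability is precisely the assertion that this linear system has a unique solution for every realizable right-hand side, i.e.\ that the map $\x \mapsto B_H\x$ is injective over $GF(q)$ (equivalently over $GF(q)^\ell$ coordinatewise), which is equivalent to $B_H$ having trivial kernel, i.e.\ rank $n$ (note $B_H$ has $n$ columns and at least $n$ rows since $m\ge n$). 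The plan is therefore to make this translation explicit and then argue the rank condition.

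First, for the ``if'' direction, assume $\operatorname{rank}(B_H)=n$. Then $B_H$ has a left inverse over $GF(q)$, so from the vector of received packets $\y=B_H\x$ one recovers $\x$ uniquely by linear algebra; hence $H$ is decodable. This direction is essentially immediate once the system is identified with $B_H\x=\y$. Second, for the ``only if'' direction, I would prove the contrapositive: suppose $\operatorname{rank}(B_H)<n$, so there is a nonzero vector $\z=(z_1,\ldots,z_n)^{\mathsf T}$ over $GF(q)$ with $B_H\z=\mathbf 0$. Then if $\x$ is any solution of the system $B_H\x=\y$, so is $\x+c\,\z$ for every scalar $c$, and these are genuinely distinct solutions since $\z\neq\mathbf 0$; hence the packets cannot be uniquely retrieved, so $H$ is undecodable. (One should also note the degenerate case where $\y$ is not in the column space at all — then there is no solution, which likewise counts as undecodable; but the point is that rank $n$ is both necessary and sufficient for a \emph{unique} solution.)

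The only subtlety — and the place I would be most careful — is the bookkeeping in the translation step: one must be sure that the linear system ``corresponding to the edges of $H$'' in the Definition is literally $B_H\x = \y$ for the appropriate $\y$, taking proper account of loops (a loop at $p_j$ contributes the row $\e_j$ and the equation $p_j=f_i$, consistent with the $0$--$1$ incidence convention) and of multi-edges (repeated identical rows, which do not change the rank). This is routine but is the crux of the argument; everything after it is standard rank–nullity. It is also worth remarking that this lemma strictly generalizes \cite[Theorem III.1]{BMMP:11}, where $q$ is even and the analogous condition is stated over $GF(2)$: there the incidence matrix is viewed mod $2$, whereas here the rank must be taken over $GF(q)$, and for odd $q$ this can differ (for instance an odd cycle has incidence matrix of rank $n$ over $GF(q)$ with $q$ odd but rank $n-1$ over $GF(2)$). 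I would close by flagging this distinction, since it is exactly the ``more general decoding criterion'' promised in the introduction.
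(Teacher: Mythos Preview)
Your proposal is correct and follows essentially the same approach as the paper: identify the received packets with the matrix equation $M=B_HP$ (the paper writes $P$ as the $n\times\ell$ matrix whose rows are the $p_i$, so that the received packets are the rows of $B_HP$), and then observe that this is uniquely solvable for $P$ if and only if $B_H$ has rank $n$ over $GF(q)$. The paper's proof is three sentences and does not separate the two directions or discuss loops, multi-edges, or the contrast with the $GF(2)$ case; your extra bookkeeping and commentary are fine but not required.
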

\begin{proof}
    Let $P$ be the $n \times \ell$ matrix whose rows are the vectors $p_1,p_2, \ldots,p_n$.
    The encoded packets received by the terminal decoder are the rows of the $|E(H)|\times \ell$ matrix $M=B_H P$.
    The equation $M=B_H P$ can be solved for unique $P$ if and only if $B_H$ has rank $n$ over $GF(q)$. 
\end{proof}

A criterion to determine whether or not the incidence matrix of a graph has full rank as a real matrix is already known. For example, in \cite{N:76} it was shown that the incidence matrix of a loop-free graph has full $\R$-rank if and only if it has no bipartite connected components nor isolated vertices. It is not hard to see that, for odd $q$, the argument given there holds also for the $GF(q)$-rank of an incidence matrix (cf. \cite{N:76}), with a minor modification to include the case of a graph with loops.\\[-18pt]

\begin{theorem}\label{thimrk}
   Let $\F$ be a field of characteristic not equal to 2. The rank over $\F$ of 
   the incidence matrix of an undirected graph with $n$ vertices,
   $s$ bipartite components and $t$ isolated vertices is $n-s-t$.  
\end{theorem}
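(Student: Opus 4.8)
The plan is to compute $\mathrm{rank}_\F B_G$ through the right kernel of $B_G$, regarded as the matrix of a linear map $\F^n\to\F^{|E|}$. First I would note that, after permuting rows and columns, $B_G$ is block diagonal: the edges of a connected component are incident only with vertices of that component, and each isolated vertex contributes an all-zero column together with no rows. Hence $\mathrm{rank}_\F B_G$ equals the sum of $\mathrm{rank}_\F B_H$ over the non-trivial connected components $H$, with isolated vertices contributing $0$. So it suffices to show that a connected graph $H$ on $k$ vertices satisfies $\mathrm{rank}_\F B_H=k-1$ when $H$ is bipartite and $\mathrm{rank}_\F B_H=k$ otherwise; summing over all components and using that $V(G)$ splits into the $t$ isolated vertices and the vertex sets of the remaining components then yields the count $n-s-t$.

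The crucial remark is that, for $z=(z_v)_{v\in V(H)}\in\F^k$, the condition $B_Hz=0$ is equivalent to $z_u+z_w=0$ for every edge $(u,w)$ of $H$; in particular $z_v=0$ whenever $v$ carries a loop, since under the convention fixed in Section~\ref{background} the row of a loop at $v$ is the unit row supported on $v$ alone. Consequently, along any walk $u_0,u_1,\dots,u_r$ in $H$ one has $z_{u_i}=(-1)^i z_{u_0}$. If $H$ is bipartite with parts $U$ and $W$, set $\chi_v=1$ for $v\in U$ and $\chi_v=-1$ for $v\in W$; then $\chi\neq 0$ and $B_H\chi=0$, so $\mathrm{rank}_\F B_H\le k-1$. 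Conversely, if $B_Hz=0$, fix $v_0\in V(H)$: connectedness together with the walk identity forces $z_v=z_{v_0}$ for $v$ in the part of $v_0$ and $z_v=-z_{v_0}$ otherwise, the consistency of this assignment being exactly the absence of odd closed walks guaranteed by Theorem~\ref{thbip}. Hence $z\in\F\chi$, the kernel is one-dimensional, and $\mathrm{rank}_\F B_H=k-1$.

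If $H$ is not bipartite, consider two cases. If $H$ has a loop at a vertex $v$, then $B_Hz=0$ gives $z_v=0$ at once. If $H$ is loopless, then by Theorem~\ref{thbip} it contains a cycle of odd length, say with vertex sequence $w_0,w_1,\dots,w_{2\ell},w_0$; running the walk identity around it yields $z_{w_0}=(-1)^{2\ell+1}z_{w_0}=-z_{w_0}$, that is $2z_{w_0}=0$, and this is the one place where the hypothesis $\mathrm{char}\,\F\neq 2$ enters, giving $z_{w_0}=0$. In either case some coordinate of $z$ vanishes, so connectedness of $H$ propagates $z_v=0$ to every $v$ through the walk identity; the kernel is therefore trivial and $\mathrm{rank}_\F B_H=k$. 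I expect the only genuinely delicate point to be handling the loop convention correctly — a loop contributes a single entry equal to $1$ to its row, rather than a zero row or an entry $2$ — which is precisely what makes a component containing a loop behave as a non-bipartite component of full $\F$-rank; beyond that bookkeeping (and the harmless fact that parallel edges merely repeat a constraint), the argument is the characteristic-free version of the real-matrix reasoning in \cite{N:76}.
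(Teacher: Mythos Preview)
Your argument is correct. The paper does not actually supply a proof of this theorem; it states the result and remarks that the real-rank argument of \cite{N:76} carries over to $GF(q)$ for odd $q$, with a minor modification to accommodate loops. Your kernel computation is precisely that standard argument, and you have handled the loop convention (a loop contributes a single $1$ in its row of $B_G$) in the way the paper intends, so your write-up effectively supplies the details the paper omits.
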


Applying Lemma~\ref{lemdh} and Theorem~\ref{thimrk}, for fields of odd characteristic we have the following characterization of decodable subgraphs.

\begin{corollary}\label{cordec}
Let $\cC$ be a coding scheme over $GF(q)$ for odd $q$. Then the subgraph $H$ of $G$ is decodable if and only if each connected component of $H$ is neither  a bipartite graph nor an isolated vertex.
\end{corollary}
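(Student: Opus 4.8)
The plan is to obtain the corollary as an immediate consequence of Lemma~\ref{lemdh} and Theorem~\ref{thimrk}, so the argument is essentially a matter of chaining the two together and checking the bookkeeping. First I would note that since $q$ is odd, $GF(q)$ has characteristic different from $2$, so Theorem~\ref{thimrk} applies with $\F=GF(q)$. Second, I would recall that $H$ is obtained from $G$ by deleting edges only, so $V(H)=V(G)$ and $H$ has exactly $n$ vertices; hence by Lemma~\ref{lemdh}, $H$ is decodable if and only if $\mathrm{rank}_{GF(q)}(B_H)=n$.

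Next I would apply Theorem~\ref{thimrk} to $H$: writing $s$ for the number of bipartite connected components of $H$ and $t$ for the number of isolated vertices of $H$, we get $\mathrm{rank}_{GF(q)}(B_H)=n-s-t$. Therefore $B_H$ has rank $n$ precisely when $s=t=0$, i.e.\ when $H$ has no bipartite component and no isolated vertex. Combining with the previous step, $H$ is decodable if and only if no connected component of $H$ is a bipartite graph and no connected component is an isolated vertex, which is exactly the stated equivalence.

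The only point that needs a little care — and the closest thing to an obstacle here — is the correspondence between the two exceptional terms $s$ and $t$ in Theorem~\ref{thimrk} and the phrasing ``neither a bipartite graph nor an isolated vertex'' in the corollary: one should observe that in Theorem~\ref{thimrk} isolated vertices are tallied separately from bipartite components, so there is no double counting, and that a component carrying at least one edge reduces the rank by one exactly when it is bipartite (by K\"onig's theorem, Theorem~\ref{thbip}), while a single-vertex component always reduces the rank by one regardless of any convention on whether it is deemed bipartite. With that observation in place, the equivalence $s=t=0 \iff$ every component of $H$ is neither bipartite nor an isolated vertex is immediate, and nothing further is required; in particular the intuition behind the ``only if'' direction captured in Lemma~\ref{lemconn} is already subsumed by the rank computation.
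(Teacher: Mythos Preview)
Your proposal is correct and matches the paper's approach exactly: the paper states the corollary as a direct consequence of Lemma~\ref{lemdh} and Theorem~\ref{thimrk} without further argument, and your write-up simply spells out that chaining in detail. The extra care you take in reconciling the separate counts $s$ and $t$ with the phrase ``neither a bipartite graph nor an isolated vertex'' is a reasonable clarification, but it is not something the paper addresses explicitly.
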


To summarize, we obtain the following theorem. \\[-18pt]

\begin{theorem}  
Let $\cC$ be a coding scheme over $GF(q)$, and $G$ be its graph
representation. Then 
a subgraph $H$ of $G$ is decodable if and only if exactly one of the following holds:
\begin{enumerate}
   \item
      $q$ is even and every connected component of $H$ has a loop,
   \item
      $q$ is odd and every connected component of $H$ has an odd cycle.    
\end{enumerate}
\label{q2_thm}
\end{theorem}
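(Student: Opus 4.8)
The plan is to deduce this theorem by combining the two characterizations of decodability that have already been established for the two parities of $q$, and then reformulating each in terms of loops and odd cycles. For the odd case, Corollary~\ref{cordec} already states that $H$ is decodable if and only if every connected component is neither bipartite nor an isolated vertex; by K\"{o}nig's theorem (Theorem~\ref{thbip}), a connected graph fails to be bipartite precisely when it contains a cycle of odd length. An isolated vertex trivially has no odd cycle, so "neither bipartite nor isolated" collapses exactly to "contains an odd cycle." This handles case (2) with essentially no extra work. (A small point worth a sentence: a loop is a cycle of length $1$, hence odd, so a component with a loop automatically satisfies the odd-cycle condition; this is consistent with the fact that a graph with a loop is never bipartite, as noted in the preliminaries.)

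For the even case, I would appeal to the result quoted from \cite{BMMP:11} — Theorem III.1 there, which is the $GF(2)$ analogue — or re-derive it directly via Lemma~\ref{lemdh}: over $GF(2)$, summing all the rows of $B_H$ restricted to the vertex set of a loop-free connected component gives the zero vector (each vertex of that component appears in an even number of incident non-loop edges when we sum columnwise... more precisely, each edge contributes to exactly two columns, so the column sums of any connected block of non-loop edges are all $\equiv 0$), showing that a loop-free component forces a rank deficiency, while the presence of a loop supplies a pivot that, together with Lemma~\ref{lemconn}-style propagation along the component, yields full rank on that component. Thus over $GF(2)$, $H$ is decodable if and only if every connected component has a loop, which is case (1).

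Finally I would observe that exactly one of the two bulleted alternatives can hold, since $q$ is either even or odd but not both, so the "exactly one of the following" phrasing is justified; the theorem is then just the disjunction of the two completed characterizations. The main obstacle is not conceptual but organizational: the even-$q$ case is not proved in this excerpt (only cited), so the cleanest write-up either invokes \cite[Theorem III.1]{BMMP:11} verbatim or inserts a short self-contained $GF(2)$ rank argument paralleling the proof of Theorem~\ref{thimrk} — the latter requires care in tracking how a single loop breaks the column-sum dependency and how connectivity then forces full column rank. Everything for odd $q$ is immediate from Corollary~\ref{cordec} and Theorem~\ref{thbip}.
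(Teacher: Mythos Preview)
Your proposal is correct and follows essentially the same route as the paper: for odd $q$ you invoke Corollary~\ref{cordec} together with K\"{o}nig's theorem (Theorem~\ref{thbip}), and for even $q$ you use Lemma~\ref{lemconn} for the loop-present direction and the column-sum-zero rank argument for the loop-free direction, exactly as the paper does. One small cleanup: the paper states the even case for all $GF(q)$ with $q$ even (i.e.\ characteristic $2$), not just $GF(2)$, and the linear dependence is among the \emph{columns} of $B_U$ (your parenthetical gets this right, but the sentence preceding it speaks of summing rows); also, the paper proves the even case in-line rather than citing \cite{BMMP:11}, so you need not defer to that reference.
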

\begin{proof}
Let $U$ be a connected component of $H$.
If $U$ has a loop, then from Lemma \ref{lemconn} each packet ${p}_i \in V(U)$ can be retrieved.
 Suppose that $q$ is even and that $U$ has no loop. 
Then every
	row of $B_U$ has Hamming weight exactly $2$ so the sum of the columns of $B_U$ is zero over $GF(q)$ and the system has rank
	less than $|V(U)|$.
   If $q$ is odd, then from Corollary \ref{cordec} and Theorem~\ref{thbip}, $U$ has an odd cycle if and only if 
it is decodable.
\end{proof} 

The following is now immediate from Theorem~\ref{q2_thm}.

\begin{corollary} 
Let $\cC$ be a coding scheme over $GF(q)$ for odd $q$. 
Then a connected subgraph $H$ of $G$ is decodable if and only if the adjacency matrix $A=A_H$ of $H$ satisfies $\mathrm{Tr}A^\ell>0$
for some odd number $\ell \in \{1,2,\ldots,n\}.$
\label{trace_cor}
\end{corollary}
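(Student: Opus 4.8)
The plan is to reduce the corollary to the combinatorial characterisation already established. By Theorem~\ref{q2_thm} (the case of odd $q$), a connected subgraph $H$ is decodable precisely when $H$ contains a cycle of odd length, so it suffices to prove that $H$ has an odd cycle if and only if $\mathrm{Tr}\,A^\ell > 0$ for some odd $\ell\in\{1,\ldots,n\}$, where $A=A_H$. Everything then hinges on translating ``contains an odd cycle'' into a positivity statement about traces of powers of the adjacency matrix.

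The key translation is the walk-counting interpretation of powers of the adjacency matrix recalled in Section~\ref{background}: the $(i,j)$-entry of $A^\ell$ is the number of walks of length $\ell$ from $v_i$ to $v_j$ in $H$, so $\mathrm{Tr}\,A^\ell = \sum_i (A^\ell)_{ii}$ equals the total number of closed walks of length $\ell$ in $H$ (a loop at $v_i$ being counted as a closed walk of length $1$, consistently with the adjacency-matrix convention used here). Being a sum of non-negative integers, $\mathrm{Tr}\,A^\ell$ is positive if and only if $H$ admits at least one closed walk of length $\ell$.

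For the forward implication I would argue: if $H$ is decodable then by Theorem~\ref{q2_thm} it contains an odd cycle $C$; since the vertices of a cycle are pairwise distinct and $V(H)\subseteq V(G)$, the length $\ell_0$ of $C$ is odd and satisfies $\ell_0 \le |V(H)| \le n$. Traversing $C$ once from any of its vertices is a closed walk of length $\ell_0$, whence $(A^{\ell_0})_{ii}\ge 1$ for such a vertex $v_i$ and $\mathrm{Tr}\,A^{\ell_0} > 0$ with $\ell_0$ an odd element of $\{1,\ldots,n\}$. Conversely, if $\mathrm{Tr}\,A^\ell > 0$ for some odd $\ell$, then $H$ has a closed walk of odd length, and by the observation in Section~\ref{background} this forces the existence of a cycle of odd length in $H$; Theorem~\ref{q2_thm} then yields that $H$ is decodable.

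I do not anticipate a serious obstacle; the proof is essentially an assembly of Theorem~\ref{q2_thm} with the $A^\ell$ walk count and the non-negativity/integrality of the entries of $A^\ell$. The only points needing a little care are the bookkeeping for loops (a loop is an odd cycle of length $1$ and must register both in the diagonal of $A$ and in the ``closed walk of odd length $\Rightarrow$ odd cycle'' step) and the remark that it suffices to search $\ell$ in the bounded range $\{1,\ldots,n\}$, which is legitimate because any odd cycle has length at most $|V(H)|\le n$.
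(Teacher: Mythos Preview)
Your proposal is correct and follows essentially the same route as the paper: both invoke Theorem~\ref{q2_thm} to reduce decodability to the existence of an odd cycle, then use the walk-counting interpretation of $A^\ell$ together with the fact that an odd closed walk forces an odd cycle. Your write-up is somewhat more explicit than the paper's about the bound $\ell\le n$ and the treatment of loops, but the argument is the same.
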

\begin{proof}
The $(i,j)$-entry of $A^\ell$ is the number of 
walks of length $\ell$ from $v_i$ to $v_j$. 
Hence, $\mathrm{Tr}A^\ell>0$ holds for some odd number $\ell$ 
if and only if there exists a closed walk of odd length $\ell$ (and hence, if and only if there exists an odd cycle). The result now follows from Theorem \ref{q2_thm}.
\end{proof}

The robustness of a coding scheme against packet loss can be measured as a function of 
the number of decodable subgraphs found upon deleting some edges.
We will next define and discuss such a function.

\section{The Decoding Probability\label{decodable_probability}}

We first recall definitions provided in \cite{BMMP:11}. Given a graph $G\in \cD(n,m)$, 
we denote by $c_x^G$ the number of decodable subgraphs of $G$ formed by deleting $x$ edges of $G$, and we write $u^G_x = \binom{m}{x}-c_x^G$ to denote the number of undecodable subgraphs of $G$
found by deleting some $x$-set of its edges. 
(Here, we assume that edges in $G$ are labeled distinctly in order to distinguish them. Thus, deleting different edges  produces 
different subgraphs at all times.)
By convention, we set $c_x^G=0$ if $x>m$ or $x<0$.
We define the \emph{decoding probability} of $G$ by $\cP_G(p,1-p)$,
where 
\begin{eqnarray}
\cP_G(y,z):=\sum_{x=0}^{m}c_x^G y^{m-x}z^x,
\label{dp_eqn}
\end{eqnarray}
and $p$ is the probability that an edge is not deleted; that is, the probability that 
a packet is successfully transmitted to the terminal.
In other words, $1-p$ is equal to the packet loss rate for each packet. 
It is clear that the decoding probability 
is the probability that all packets are retrieved at the terminal even though 
some packet loss occurs during transmission, and indeed,   
measures the robustness of a coding scheme against packet loss. 

In this section, we will first make an approach for 
analyzing the decoding probability of a graph $G$,
using pre-computed decoding probabilities of graphs $G-e^j$ and $G\cdot e^k$
obtainable by the \emph{deletion} and the \emph{contraction} of edges between $u$ and $v$, respectively, as denoted below.

For a graph $G$ and distinct vertices $u,v$ in $G$, suppose that 
there are $k \ge 1$ edges $e_1,e_2, \ldots, e_k$ connecting $u$ and $v$.  
Set $e^j=e^j(u,v)=\{e_1,e_2, \ldots, e_j\}$ for $1\le j \le k$. 
We define $G-e^j$ $(1\le j \le k)$ to be the graph obtained by deleting $j$ edges from 
those $k$ edges, 
and $G\cdot e^k$ to be the graph obtained by identifying the ends $u,v$  as a single vertex (\emph{i.e.}, by applying the contraction of edges between $u$ and $v$).
The following lemma states a relationship between $G$, $G-e^j$ and $G\cdot e^k$.

\begin{lemma}
Let $G\in \cD(n,m)$ be a graph representation of a coding scheme over $GF(q)$, and 
suppose there are $k$ edges 
$e_1, e_2 \ldots, e_k$ connecting distinct vertices $u$ and $v$ in $G$. Then for the case when 
\begin{enumerate}
\item If $q$ is even; or
\item If $q$ is odd and no edge in $e^k(=e^k(u,v))$ can be included in a cycle in $G$,
\end{enumerate}
we have $$c_x^G=\sum_{j=1}^k \binom{k}{j}c_{x-j}^{G-e^j}+c_{x}^{G\cdot e^k}.$$
\label{equality_lem}
\end{lemma}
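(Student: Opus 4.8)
The plan is to count decodable subgraphs of $G$ obtained by deleting $x$ edges, splitting according to how many of the $k$ parallel edges $e_1,\dots,e_k$ between $u$ and $v$ survive the deletion. Write $\cS$ for the set of $x$-element subsets $D\subseteq E(G)$ such that $G-D$ is decodable. Partition $\cS$ according to $j=|D\cap e^k|$, the number of the $k$ parallel edges that are deleted; thus $j$ ranges from $0$ to $\min(k,x)$, and for fixed $j$ we choose which $j$ of the $k$ parallel edges are deleted (that is the $\binom{k}{j}$ factor — here we use that edges are labelled distinctly) and then delete $x-j$ further edges from $E(G)\setminus e^k$. The term $j=0$ will give $c_x^{G-e^0}=c_x^G$ itself, so I must be careful: actually the intended decomposition keeps at least one parallel edge in exactly the $j\ge 1$ terms via $G-e^j$, and the $j=0$ case — all $k$ parallel edges surviving — is captured by the contraction term $c_x^{G\cdot e^k}$. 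So the crux is the following two bijective/counting claims.

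First claim: for $1\le j\le k$, the number of decodable $x$-subsets $D$ with $|D\cap e^k|=j$ equals $\binom{k}{j}c_{x-j}^{G-e^j}$. Indeed, once a particular $j$-subset of the parallel edges is chosen for deletion (giving the graph $G-e^j$, which by symmetry of the parallel edges does not depend on which $j$ we picked), the remaining $x-j$ deletions are from $E(G-e^j)$, and $G-D$ is decodable iff the corresponding subgraph of $G-e^j$ is decodable. This is immediate since $G-D$ and the subgraph of $G-e^j$ obtained by the remaining deletions are literally the same graph. Summing over the $\binom{k}{j}$ choices gives the stated term.

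Second claim — the heart of the argument: the number of decodable $x$-subsets $D$ with $D\cap e^k=\emptyset$ (all $k$ parallel edges retained) equals $c_x^{G\cdot e^k}$, the number of decodable $x$-subsets of $E(G\cdot e^k)=E(G)\setminus e^k$ (the parallel edges disappear under contraction, and since $u\ne v$ no loops are created, so $|E(G\cdot e^k)|=m-k$ and its edge set is canonically $E(G)\setminus e^k$). For this I need: deleting an $x$-subset $D\subseteq E(G)\setminus e^k$ from $G$ (keeping all parallel edges) yields a decodable graph if and only if deleting the same $D$ from $G\cdot e^k$ yields a decodable graph. This is exactly where hypotheses (1)/(2) are used. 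In case (1), $q$ even: by Theorem~\ref{q2_thm}, decodability means every component has a loop; contracting the $uv$-edges merely merges the component(s) of $u$ and $v$, and since the parallel edges themselves carry no loop, the family of loops is unchanged, so "every component has a loop" transfers both directions — the only subtlety being that merging two loop-containing components into one keeps it loop-containing, and a component of $G-D$ containing neither $u$ nor $v$ is untouched. In case (2), $q$ odd with no edge of $e^k$ lying on a cycle of $G$: by Corollary~\ref{cordec}/Theorem~\ref{q2_thm}, decodability means every component has an odd cycle; since no $e_i$ is on any cycle of $G$ (hence of any subgraph $G-D$), the parallel edges are irrelevant to which components contain odd cycles, and contracting them merges the $u$- and $v$-components without creating or destroying any odd cycle (a new odd cycle through the contracted vertex would come from an odd $u$–$v$ walk avoiding $e^k$ together with an edge $e_i$, forming a cycle through $e_i$ in $G$, contradicting the hypothesis). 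So again decodability of $G-D$ and of $(G\cdot e^k)-D$ coincide.

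The main obstacle I anticipate is making the second claim airtight in the odd-$q$ case: one must verify carefully that contracting the parallel edges cannot create a \emph{new} odd cycle (which would spoil the "only if" direction) nor destroy one, and that the "no edge of $e^k$ on a cycle of $G$" hypothesis is precisely what rules out the bad case. The even-$q$ case is essentially bookkeeping about components and loops. Once both claims are established, summing $\sum_{j=1}^{k}\binom{k}{j}c_{x-j}^{G-e^j}+c_x^{G\cdot e^k}=|\cS|=c_x^G$ finishes the proof; the convention $c_{x-j}^{G-e^j}=0$ for $x<j$ handles the range of $j$ automatically.
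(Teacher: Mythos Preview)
Your proof is correct and takes essentially the same approach as the paper: partition decodable $x$-deletions by the number $j$ of parallel $uv$-edges removed, handle $j\ge 1$ directly via $G-e^j$, and for $j=0$ set up a bijection with deletions from $G\cdot e^k$ by showing decodability is preserved under contraction (using the loop/odd-cycle criteria of Theorem~\ref{q2_thm} together with hypothesis (2) in the odd-$q$ case). Your justification of why contraction neither creates nor destroys odd cycles under hypothesis (2) is in fact more explicit than the paper's, which simply asserts this step.
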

\begin{proof} 
Observe that there are $k+1$ mutually exclusive cases when deleting 
$x$ edges from $G$; 
\begin{itemize}
\item[(A)] the $x$ edges 
include $j$ edges $(1\le j \le k)$ from  $e^k$; or 
\item [(B)] the $x$ edges do not include any edge in $e^k$. 
\end{itemize}
It is trivial that 
the number of decodable graphs in case (A) is given by $ \binom{k}{j}c_{x-j}^{G-e^j}$, 
here $\binom{k}{j}$ comes from the number of choices deleting $j$ edges from $e^k$ 
and observe that deletion of any such $j$ edges results in an isomorphic graph (a graph with the same topology). 
So we will show that the number of decodable graphs counted in case (B) is 
equal to  $c_{x}^{G\cdot e^k}$. To do so, we define $X$ and $Y$ to be 
the sets of decodable graphs counted in case (B) and in $c_{x}^{G\cdot e^k}$, respectively, 
and then show the existence of a bijection $f: X \rightarrow Y$. 

Let $H$ be a subgraph of $G$ 
such that $e^k\subset  E(H)$. Then, it is straightforward to check that $H$ and $H\cdot e^k$ have 
the same number of components and loops at each component. 
Furthermore, for $GF(q)$ with $q$ odd, 
from the assumption on $e \in e^k$, we have that $H$ 
has an odd cycle if and only if $H \cdot e^k$ has an odd cycle. 
Hence, for any $q$, we can conclude that $H$ is decodable if and only if $H \cdot e^k$ is decodable. 
Therefore, set $f: X \rightarrow Y$ to be $f(H)=H \cdot e^k$, and then $f$ is obviously bijective, which completes the proof. 
\end{proof} 

From Lemma~\ref{equality_lem}, we can obtain 
the following corollary which indicates the relationship of 
 $\cP_G$, $\cP_{G-e^j}$ and $\cP_{G\cdot e^k}$.
 A similar result holds for the chromatic polynomial in graph theory.\\[-18pt]

\begin{corollary}
Let $G\in \cD(n,m)$ be a graph representation of a coding scheme over $GF(q)$, and 
suppose there are $k$ edges 
$e_1, e_2 \ldots, e_k$ connecting distinct vertices $u$ and $v$ in $G$. 
If assumption 1) or 2) in the statement of Lemma~\ref{equality_lem} holds, 
then we have 
$$\cP_G(y,z)=\sum_{j=1}^k \binom{k}{j}z^j\ \cP_{G-e^j}(y,z)+y^k\ \cP_{G\cdot e^k}(y,z).$$
\label{euation_cor}
\end{corollary}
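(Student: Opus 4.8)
The plan is to derive Corollary~\ref{euation_cor} directly from Lemma~\ref{equality_lem} by substituting the stated identity for $c_x^G$ into the defining sum~\eqref{dp_eqn} for $\cP_G(y,z)$ and rearranging. First I would write
$$\cP_G(y,z)=\sum_{x=0}^{m}c_x^G\, y^{m-x}z^x = \sum_{x=0}^{m}\left(\sum_{j=1}^{k}\binom{k}{j}c_{x-j}^{G-e^j}+c_x^{G\cdot e^k}\right) y^{m-x}z^x,$$
valid precisely because hypothesis 1) or 2) lets us invoke Lemma~\ref{equality_lem}. Then split the double sum into the $j$-indexed part and the contraction part and handle each separately.

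For the contraction term, note that $G\cdot e^k$ has $m-k$ edges, so $\sum_{x=0}^{m}c_x^{G\cdot e^k}y^{m-x}z^x = y^k\sum_{x=0}^{m-k}c_x^{G\cdot e^k}y^{(m-k)-x}z^x = y^k\,\cP_{G\cdot e^k}(y,z)$; here I use the convention $c_x^{G\cdot e^k}=0$ for $x>m-k$, which is already in force in the paper. For the deletion terms, fix $j$ and reindex with $x'=x-j$: the graph $G-e^j$ has $m-j$ edges, and as $x$ runs over $\{0,\dots,m\}$ the shifted index $x'$ runs over $\{-j,\dots,m-j\}$, with $c^{G-e^j}_{x'}=0$ outside $\{0,\dots,m-j\}$ by the same convention. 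Since $y^{m-x}z^x = y^{(m-j)-x'} z^{x'} z^{j}$, each $j$-summand becomes $\binom{k}{j} z^j \sum_{x'=0}^{m-j} c^{G-e^j}_{x'} y^{(m-j)-x'} z^{x'} = \binom{k}{j} z^j \,\cP_{G-e^j}(y,z)$. Summing over $j=1,\dots,k$ and adding the contraction term yields the claimed formula.

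There is really no serious obstacle here: the only points requiring a word of care are (i) that the hypotheses of the corollary are exactly those under which Lemma~\ref{equality_lem} applies, so the starting substitution is legitimate, and (ii) that the index shifts stay consistent with the boundary conventions $c^H_x=0$ for $x<0$ or $x>|E(H)|$, so that no spurious terms appear and the exponents of $y$ match the correct edge counts $m-j$ and $m-k$. The analogy with the chromatic polynomial's deletion–contraction recurrence is only motivational and need not enter the proof. I would close with a one-line remark that this is a routine generating-function manipulation of the identity in Lemma~\ref{equality_lem}.
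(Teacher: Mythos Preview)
Your proposal is correct and follows essentially the same approach as the paper: substitute the identity of Lemma~\ref{equality_lem} into the defining sum for $\cP_G(y,z)$, split into deletion and contraction parts, reindex the deletion sums by $x'=x-j$ using the convention $c_x^H=0$ for $x<0$, and truncate the contraction sum at $m-k$ using $c_x^{G\cdot e^k}=0$ for $x>m-k$ to factor out $z^j$ and $y^k$ respectively. The paper's proof is exactly this computation written out as a displayed chain of equalities.
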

\begin{proof} 
\begin{eqnarray}
\cP_G(y,z)    &=& \sum_{x=0}^{m} c_x^G y^{m-x}z^x \nonumber \\
      &=& \sum_{x=0}^{m} \left(\sum_{j=1}^k \binom{k}{j} c_{x-j}^{G-e^j}+c_{x}^{G\cdot e^k}\right) y^{m-x}z^x \\
      &=& \sum_{j=1}^k  \binom{k}{j} \sum_{x=0}^{m} c_{x-j}^{G-e^j} y^{m-x}z^x+ 
               \sum_{x=0}^{m} c_{x}^{G\cdot e^k} y^{m-x}z^x \nonumber \\
      & = &  \sum_{j=1}^k  \binom{k}{j} z^j \sum_{x=0}^{m-j} c_{x}^{G-e} y^{m-x-j}z^{x}  \\
      & & \ \ \ \ \ \ +y^k\sum_{x=0}^{m-k} c_{x}^{G\cdot e^k} y^{m-x-k}z^x  \\
      &=& \sum_{j=1}^k \binom{k}{j}z^j\ \cP_{G-e^j}(y,z)+y^k\  \cP_{G\cdot e^k}(y,z), \nonumber
\end{eqnarray}
where (2) is from Lemma~\ref{equality_lem}, (3) is obtainable from 
\begin{eqnarray*}
\sum_{x=0}^{m} c_{x-j}^{G-e^j}y^{m-x}z^x &=& \sum_{x=-j}^{m-j} c_{x}^{G-e^j}y^{m-x-j}z^{x+j}\\
&=&z^j\sum_{x=0}^{m-j} c_{x}^{G-e^j}y^{m-x-j}z^x 
\end{eqnarray*}
observing that  $c_{x}^{G-e^j}=0$ when $x<0$,
and (4) is from the fact that $c^{G\cdot e^k}_x=0$ for $x>m-k$ as 
$G\cdot e^k$ has $m-k$ edges. 
\end{proof} 
 
Corollary~\ref{euation_cor}  implies that for $G\in \cD(n,m)$ 
if we have computed decoding probabilities of 
graphs in $\cD(n',m')$ with $n'\le n$  and $m'<m$ in advance, 
the decoding probability of $G$ can be obtained recursively. 
Since $p>1-p$ in general, if $\cP_{G\cdot e^k}(p,1-p)$ is large enough, then  $\cP_{G}(p,1-p)$ could also have large value. 


\section{Decoding Cuts\label{dec_cut}}
Our interest is to find a graph $G\in\cD(n,m)$ (for some fixed $m$ and $n$) which has the maximum decoding probability amongst all graphs in $\cD(n,m)$.  
To this end, we define
a \emph{decoding cut} $\cL$
of $G$, as follows.
Given $G \in \cD(n,m)$,  
a decoding cut (D-cut) $\cL$ is a subset  of $E$
such that graph $\tilde G=(V,E\backslash \cL)$ is undecodable. 
We denote by $b_G$ the smallest cardinality of 
any D-cut of $G$. In other words, 
$b_G$ is the smallest $x$ such that 
$u_x^G>0$ (that is, $c_x^G<\binom{m}{x}$). Clearly  
a high value of $b_G$ is desirable for a good coding scheme.

\subsection{Upper bounds on $b_G$ for $GF(q)$ with $q$ even}
\label{b_G-even}
We first discuss some upper bounds on $b_G$ for $GF(q)$ with $q$ even. Recall from Theorem~\ref{q2_thm}
that a graph $H$ is decodable  if and only if each component of $H$ has a loop. 

\begin{remark} For $b_G$ of a graph $G$, we note the following.
\begin{enumerate}
\item $b_G\leq \min(L_G, \delta _{I}(G))$ since 
deleting all loops in $G$ or deleting all edges attached to a vertex $v$ with incidence degree $d_{I}(v)= \delta _{I}(G)$
yields an undecodable graph. 
\item If $L_G\geq \lambda(G)$, then $\lambda(G) \leq b_G$
since a resulting graph $\tilde G$ of $G$ after deletion of some edges cannot be undecodable, if $L_{\tilde G}\not =0$, 
unless $\tilde G$ is disconnected. 
\end{enumerate}
\label{minloop_rem}
\end{remark}

The following lemma can be easily obtained
using elementary graph theory.

\begin{lemma}
Let $G\in \cD(n,m)$. Then, $\delta _{I}(G)\leq 2m/n-1$ and $b_G \leq 2m/(n+1)$.
In particular, $b_G \leq \min(  \lfloor 2m/n-1 \rfloor , \lfloor 2m/(n+1)  \rfloor)$
\label{edge_lem}
\end{lemma}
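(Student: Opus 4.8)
The statement has two separate inequalities plus a combined conclusion, so I would treat each bound in turn and then assemble the final line. For the first bound, $\delta_I(G) \le 2m/n - 1$: the quantity $S_I(G) = \sum_{v \in V} d_I(v)$ counts incidence degrees, and since $G \in \cD(n,m)$ is decodable over $GF(q)$, every connected component must contain a loop (if $q$ is even) or an odd cycle (if $q$ is odd) by Theorem~\ref{q2_thm}; in particular $G$ has no isolated vertices, so each $d_I(v) \ge 1$. We have the identity $S_I(G) = 2|E| - L_G = 2m - L_G$ recorded in the preliminaries. The minimum incidence degree satisfies $n\, \delta_I(G) \le S_I(G) = 2m - L_G \le 2m$, which only gives $\delta_I(G) \le 2m/n$. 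To sharpen this to $2m/n - 1$ I would argue that $G$ cannot be loop-free with all incidence degrees equal in the tight case: decodability forces at least one loop in each component, and a more careful averaging — taking into account that a loop contributes $1$ (not $2$) to the incidence-degree sum while still being an edge — should yield the extra $-1$. Concretely, if every component has a loop then $L_G \ge$ (number of components) $\ge 1$, but to get the full $-1$ one wants something like: there is a vertex $v$ of minimum incidence degree lying in a component with a loop, and a short combinatorial count on that component does it. I would write this out by isolating the worst component.

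For the second bound $b_G \le 2m/(n+1)$: recall $b_G$ is the least $x$ with $u_x^G > 0$, i.e.\ the minimum number of edge deletions producing an undecodable subgraph. The natural strategy is to exhibit an explicit decoding cut of size at most $2m/(n+1)$. A good candidate is: pick a vertex $v$ and delete all edges incident with $v$, isolating $v$ and hence producing an undecodable graph (an isolated vertex is never decodable, by Corollary~\ref{cordec}, and for $q$ even an isolated vertex trivially has no loop). This gives $b_G \le \delta_I(G)$, which combined with the first bound gives $b_G \le 2m/n - 1$ — but that is the \emph{combined} conclusion, not the intermediate $2m/(n+1)$ claim, so the $2m/(n+1)$ bound must come from a different cut. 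The idea there: delete all edges incident with \emph{two} adjacent vertices $u, v$ chosen so that their combined incidence degree (counting the edge(s) between them appropriately) is small; isolating $u$ and $v$ together. By averaging over the $\le m$ edges, some edge $(u,v)$ has $d_I(u) + d_I(v)$ small. Summing $d_I(u) + d_I(v)$ over all edges $(u,v) \in E$ gives $\sum_v d_I(v)^2 \ge (\sum_v d_I(v))^2 / n$ by Cauchy–Schwarz; but I actually want an \emph{upper} bound on the minimum of $d_I(u)+d_I(v)$ over edges, and a cleaner route is: the smallest such sum is at most $(1/m)\sum_{(u,v)\in E}(d_I(u)+d_I(v))$, and then bound that. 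Actually the cleanest approach is to just take the $n+1$ "smallest" vertices in some sense, or to observe that deleting all edges touching a suitable set $W$ of vertices with $|W| = 2$ — handled by a double-counting of $\sum_{\text{edges}} (d_I(u)+d_I(v))$ against $2m$ and $n$. I would set up that double count carefully and extract the $2m/(n+1)$ via a pigeonhole on incidence degrees.

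\textbf{Main obstacle.} I expect the subtle point to be pinning down exactly which small edge cut to delete to realize the $2m/(n+1)$ bound and making the averaging tight — the factor $n+1$ rather than $n$ suggests the argument involves a set of vertices whose removal isolates something, with the "$+1$" tracking one extra vertex or one extra edge. The other routine-but-fiddly step is the $-1$ improvement in $\delta_I(G) \le 2m/n - 1$: getting it in full generality (all components, loops vs.\ non-loops, $q$ even vs.\ odd) requires being careful that decodability genuinely rules out the boundary case $\delta_I(G) = 2m/n$. Once both inequalities are in hand the final line $b_G \le \min(\lfloor 2m/n - 1\rfloor, \lfloor 2m/(n+1)\rfloor)$ is immediate: $b_G$ is an integer bounded above by both $\delta_I(G) \le 2m/n-1$ (via the isolate-a-vertex cut, using $b_G \le \delta_I(G)$ from Remark~\ref{minloop_rem}) and $2m/(n+1)$, so it is at most the floor of each.
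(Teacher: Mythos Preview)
Your plan for the first inequality is basically right but overcomplicated. The lemma sits in Subsection~\ref{b_G-even}, the even-$q$ setting, so decodability directly forces $L_G \ge 1$ (every component must carry a loop). Then
\[
n\,\delta_I(G) \;\le\; S_I(G) \;=\; 2m - L_G \;<\; 2m
\]
is the entire argument; no component-by-component refinement and no separate odd-$q$ case are needed.

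The genuine gap is in the second bound $b_G \le 2m/(n+1)$. You propose to isolate a \emph{pair} of adjacent vertices chosen by averaging or Cauchy--Schwarz on $\sum_{(u,v)\in E}\bigl(d_I(u)+d_I(v)\bigr)$. That sum equals $\sum_v d_I(v)^2$, and Cauchy--Schwarz gives a \emph{lower} bound on it, not the upper bound you would need to control the minimum over edges; so as written the argument does not close, and in any case it is far more machinery than required. The idea you are missing is the \emph{other} half of Remark~\ref{minloop_rem}(1): since $q$ is even, deleting all loops already yields an undecodable graph, so $b_G \le L_G$. The paper then simply chains
\[
n\,b_G \;\le\; n\,\delta_I(G) \;\le\; 2m - L_G \;\le\; 2m - b_G,
\]
using $b_G \le \delta_I(G)$ on the left and $b_G \le L_G$ on the right, and rearranges to $(n+1)\,b_G \le 2m$. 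The ``$+1$'' in the denominator that puzzled you is exactly the reappearance of $b_G$ on the right-hand side via $L_G \ge b_G$; it has nothing to do with a second vertex.
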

\begin{proof}
Since $G$ is decodable, $L_G \geq 1$, and we have
$ n \delta_I(G) \leq S_I(G) = 2m - L_G < 2m.$
Furthermore, since $b_G\leq L_G$, we have 
$n b_G \leq n \delta_I(G) \leq S_I(G) = 2m - L_G < 2m -b_G$. 
\end{proof} \mbox{} 

From Lemma~\ref{edge_lem}, it follows that for a system sending $n$ packets with redundancy $r$, any graph representation $G$ of a coding scheme  satisfies
$b_G\leq \min(2r-1, \lfloor \frac{2rn}{n+1}\rfloor) = \lfloor \frac{2rn}{n+1}\rfloor$, which is simply 
$2r-1$ whenever $r\leq \frac{n+1}{2}$. 

\subsection{Upper bounds on $b_G$ for $GF(q)$ with $q$ odd}
We next discuss some upper bounds on $b_G$ for $GF(q)$ with $q$ odd, where the notion of cuts in graph theory plays an important role.

Recall that a \emph{cut} of a graph $G=(V,E)$ is a partition of the vertex set $V$ into a pair of disjoint subsets $V_1$ and $V_2$. The \emph{cut-set} of the partition is the set of edges of $E$ that have one endpoint in $V_1$ and the other in $V_2$. A cut is called a \emph{maximal cut} if no other cut has a larger cut-set.
A maximal cut of a graph can be identified with a maximal bipartite subgraph as follows. A subgraph $(V,E')$ of $G$ is bipartite
if and only if its edge set $E'$ form a cut-set of a partition of the vertices of $V$. 
Then a bipartite subgraph has a maximum number of edges when its edges correspond to a maximal cut of $G$. We denote this number by $\Gamma(G)$.\\[-18pt]

\begin{lemma}\label{dcut_lem}
   The size of the minimum D-cut $b_G$ of $G=(V,E)$ is upper-bounded by the following:
   \begin{itemize}
      \item[(i)]
         $\delta_{I}(G)=\min \{d_{I}(v): v \in V\}$,
      \item[(ii)]
         $|E|-\Gamma(G)$.
   \end{itemize} 
\end{lemma}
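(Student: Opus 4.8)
The plan is to establish the two upper bounds separately, both by exhibiting an explicit D-cut of the stated size. Since $b_G$ is by definition the smallest cardinality of a set $\cL \subseteq E$ whose removal leaves an undecodable graph, it suffices in each case to produce \emph{one} such set and bound its size. Throughout I would use Corollary~\ref{cordec}: for odd $q$, a subgraph is undecodable precisely when some connected component is bipartite or is an isolated vertex.

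For part (i), I would pick a vertex $v$ with $d_I(v) = \delta_I(G)$ and let $\cL$ be the set of all edges incident with $v$ (counting a loop at $v$ once, consistent with the definition of $d_I$). After deleting $\cL$, the vertex $v$ becomes isolated, so the resulting graph $\tilde G = (V, E \setminus \cL)$ has an isolated vertex and is therefore undecodable. Hence $b_G \le |\cL| = d_I(v) = \delta_I(G)$. This is the routine half.

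For part (ii), I would take a maximal cut of $G$, realized as a maximal bipartite subgraph $(V, E')$ with $|E'| = \Gamma(G)$, and set $\cL = E \setminus E'$, so $|\cL| = |E| - \Gamma(G)$. Deleting $\cL$ leaves exactly the bipartite subgraph $(V, E')$ on the full vertex set $V$. I then need to argue this graph is undecodable. If $(V, E')$ is connected it is a single bipartite component, hence undecodable by Corollary~\ref{cordec}. If it is disconnected, each connected component is itself bipartite (a subgraph of a bipartite graph is bipartite), so again by Corollary~\ref{cordec} it is undecodable — in fact no graph whose every component is bipartite can be decodable. (One should also note $(V,E')$ has no loops, since a bipartite graph has none, so there is no subtlety there.) Thus $b_G \le |E| - \Gamma(G)$.

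I anticipate the main obstacle is a bookkeeping point rather than a deep one: making sure the correspondence "bipartite subgraph $\iff$ cut-set" stated just before the lemma is used correctly, in particular that deleting the complement of a maximum cut-set leaves \emph{all} of $V$ (not a proper subset) and that every component of the remaining graph is bipartite, so that undecodability follows even when the graph is disconnected and some components might be edgeless. Once that is pinned down, both bounds follow immediately since $b_G$ is a minimum over all D-cuts and we have exhibited D-cuts of the required sizes.
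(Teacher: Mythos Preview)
Your proposal is correct and follows essentially the same approach as the paper: for (i) you isolate a minimum-degree vertex, and for (ii) you delete the complement of a maximum cut-set to leave a bipartite subgraph, which is undecodable by Corollary~\ref{cordec}. The paper's argument is terser (it does not spell out the disconnected case), but the idea is identical.
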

\begin{proof}
    The first item is clear since we can make an isolated vertex by deleting $\delta_I(G)$ attached edges from a vertex $v$ with incidence degree $d_I(v)=\delta_I(G)$, and the isolated vertex is unretrievable.
    
    To see that (ii) holds, note that to produce a bipartite subgraph
    $(V,E \backslash \cL)$ of $G$ by deleting edges $\cL$ from $E$ we must have
    $|\cL| \geq |E| - \Gamma(G)$ and we have equality exactly when $(V,E \backslash \cL)$ is a maximal bipartite subgraph of $G$.
\end{proof} 

There are several bounds on the size of a maximal bipartite subgraph. We will use the following results of Edwards (cf. \cite{E:73}, \cite{E:75} and \cite{KW:08}).\\[-18pt]

\begin{theorem}[Edwards]\label{mc_thm}
   Let $G=(V,E)$ have $n$ vertices and $m$ edges. Then 
   \begin{enumerate}
      \item
          $\Gamma(G) \geq \frac{m}{2} + \frac{1}{8}(\sqrt{8m+1}-1)$
      \item
          $\Gamma(G) \geq \frac{m}{2} + \frac{n-1}{4}$ if $G$ is connected.    
   \end{enumerate} 
\end{theorem}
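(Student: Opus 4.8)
The plan is to reduce both inequalities to the same basic observation: a maximal cut of $G$ can be built greedily, one vertex at a time, and at each step we are guaranteed to capture at least half of the edges joining the new vertex to the already-placed vertices, plus sometimes an extra surplus which we must account for carefully. Formally, order the vertices $v_1,\dots,v_n$ and process them in turn. Maintain a partition $(V_1,V_2)$ of the vertices seen so far; when $v_i$ arrives, let $a_i$ and $b_i$ be the number of edges from $v_i$ to the current $V_1$ and to the current $V_2$, respectively, and place $v_i$ into whichever side makes the larger contribution to the cut. This adds $\max(a_i,b_i)=\tfrac12(a_i+b_i)+\tfrac12|a_i-b_i|$ edges to the cut-set. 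Summing over $i$ and noting $\sum_i(a_i+b_i)=m$, we obtain $\Gamma(G)\ge \tfrac{m}{2}+\tfrac12\sum_{i=1}^n|a_i-b_i|$. Everything then rests on lower-bounding $\sum_i|a_i-b_i|$.

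For part 2 (the connected case), I would use connectivity to guarantee that for every $i\ge 2$ the vertex $v_i$ has at least one edge to an earlier vertex (choosing the ordering, e.g., as a BFS/DFS order from $v_1$ ensures this). Hence $a_i+b_i\ge 1$ for $i\ge 2$, and since $a_i,b_i$ are nonnegative integers, $a_i+b_i\ge 1$ forces $|a_i-b_i|\ge 1$ when $a_i+b_i$ is odd; more robustly, one shows $|a_i-b_i|\ge 1$ fails only when $a_i=b_i$, which requires $a_i+b_i$ even and hence $\ge 2$. Either way a short case analysis gives $\sum_{i=2}^n|a_i-b_i|\ge \tfrac{n-1}{2}$ on average; a cleaner route is to argue $|a_i-b_i|\ge 1$ cannot fail ``too often'' and extract the bound $\sum_i|a_i-b_i|\ge \tfrac{n-1}{2}$, yielding $\Gamma(G)\ge \tfrac{m}{2}+\tfrac{n-1}{4}$. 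For part 1 (no connectivity hypothesis), we cannot rely on $a_i+b_i\ge 1$, so instead I would bound $\sum_i|a_i-b_i|$ in terms of $m$ alone. The key inequality here is that if $k$ of the increments $a_i+b_i$ are positive then $\sum(a_i+b_i)=m$ forces one of them to be reasonably large, and an extremal/convexity argument (or a direct induction on $m$) shows $\sum_i|a_i-b_i|\ge \tfrac14(\sqrt{8m+1}-1)$; the threshold $\sqrt{8m+1}$ is exactly what appears when one solves the quadratic coming from the worst case where the positive increments are $1,2,3,\dots$ (so that a run of $t$ vertices carries $\binom{t+1}{2}$ edges and contributes only about $t$ to the parity surplus).

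I expect the main obstacle to be part 1: getting the precise constant $\tfrac18(\sqrt{8m+1}-1)$ rather than a weaker $\Omega(\sqrt m)$ bound requires identifying and analyzing the exact extremal configuration (edge multiplicities forming a triangular-number pattern), and the cleanest writeup is probably an induction on the number of edges in which one deletes a carefully chosen vertex and checks that the quadratic bookkeeping closes — verifying that $\tfrac{m'}{2}+\tfrac18(\sqrt{8m'+1}-1)$ plus the guaranteed gain at the deleted vertex dominates $\tfrac{m}{2}+\tfrac18(\sqrt{8m+1}-1)$. Since this is Edwards' classical theorem, I would in the paper simply cite \cite{E:73,E:75,KW:08} for the full argument and only sketch the greedy-partition idea above, as it suffices for our application in Lemma~\ref{dcut_lem}.
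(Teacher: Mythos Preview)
The paper does not prove this theorem at all: it is quoted as a classical result of Edwards with citations to \cite{E:73}, \cite{E:75} and \cite{KW:08}, and no argument is supplied. Your ultimate recommendation --- to cite these references rather than reproduce a full proof --- therefore coincides exactly with what the paper does.

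That said, the sketch you give for part~2 has a genuine gap, so you should not present it as if it ``suffices''. The greedy placement certainly yields
\[
   \Gamma(G)\;\ge\;\frac{m}{2}+\frac12\sum_{i\ge 2}|a_i-b_i|,
\]
but the inequality $\sum_{i\ge 2}|a_i-b_i|\ge (n-1)/2$ that you need is \emph{false} for a BFS/DFS ordering in general. Take $V=\{1,2,3,4,5\}$ with edge set $\{12,13,23,24,34,25,35\}$ (so $n=5$, $m=7$) and any BFS or DFS order starting at vertex~$1$, say $1,2,3,4,5$. After placing $1\in V_1$ and $2\in V_2$, each of vertices $3,4,5$ has exactly one back-neighbour on each side, so $|a_i-b_i|=0$ for $i=3,4,5$ and $\sum_{i\ge 2}|a_i-b_i|=1<2=(n-1)/2$. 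The greedy cut obtained is only $4$, whereas Edwards' bound requires $\Gamma(G)\ge m/2+(n-1)/4=4.5$ (and in fact $\Gamma(G)=6$ here, via $V_1=\{1,4,5\}$, $V_2=\{2,3\}$). Thus the BFS-greedy framework by itself does not recover the $(n-1)/4$ term; Edwards' actual argument is more delicate (one route is an induction that always removes a vertex of \emph{odd} degree whose deletion keeps the graph connected, handling the even-degree obstruction separately). Your treatment of part~1 is likewise only heuristic. Since the paper simply cites the result, none of this affects the manuscript --- but drop the claim that the sketch is adequate.
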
\mbox{}\\[-20pt]

Applying Theorem \ref{mc_thm} to (ii) of Lemma \ref{dcut_lem} immediately gives the following bounds on $b_G$.
\\[-18pt]

\begin{corollary}
   Let $G=(V,E)$ have $n$ vertices and $m$ edges. Then
   \begin{enumerate}
      \item
          $b_G \leq \frac{m}{2} - \frac{1}{8}(\sqrt{8m+1}-1)$, and 
      \item
          $b_G \leq \frac{m}{2} - \frac{n-1}{4}$ if $G$ is connected.    
   \end{enumerate} 
\end{corollary}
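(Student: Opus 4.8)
The plan is to derive this corollary directly from part (ii) of Lemma~\ref{dcut_lem} combined with Theorem~\ref{mc_thm}, since both ingredients are already in hand. Lemma~\ref{dcut_lem}(ii) gives $b_G \leq |E| - \Gamma(G) = m - \Gamma(G)$, so everything reduces to turning a lower bound on $\Gamma(G)$ into an upper bound on $m - \Gamma(G)$ by negation.

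First I would substitute the Edwards bound $\Gamma(G) \geq \frac{m}{2} + \frac{1}{8}(\sqrt{8m+1}-1)$ from Theorem~\ref{mc_thm}(1) into $b_G \leq m - \Gamma(G)$. This yields
\[
b_G \;\leq\; m - \Gamma(G) \;\leq\; m - \left(\frac{m}{2} + \frac{1}{8}(\sqrt{8m+1}-1)\right) \;=\; \frac{m}{2} - \frac{1}{8}(\sqrt{8m+1}-1),
\]
which is exactly statement (1). Next, for the connected case, I would apply Theorem~\ref{mc_thm}(2) in the same way: $b_G \leq m - \Gamma(G) \leq m - \left(\frac{m}{2} + \frac{n-1}{4}\right) = \frac{m}{2} - \frac{n-1}{4}$, giving statement (2). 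No further graph-theoretic work is needed; the proof is a two-line algebraic rearrangement in each case.

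There is essentially no obstacle here — the corollary is a formal consequence of results already established, so the only thing to be careful about is making sure the connectivity hypothesis is correctly carried over (Theorem~\ref{mc_thm}(2) and hence the second bound require $G$ connected, while the first bound holds for all $G=(V,E)$ with $n$ vertices and $m$ edges). One might also remark that the hypothesis $G \in \cD(n,m)$ implicitly justifies talking about a D-cut at all, but since the statement is phrased purely in terms of a graph with $n$ vertices and $m$ edges and the quantity $b_G$, it suffices to invoke Lemma~\ref{dcut_lem}, whose conclusion already bounds $b_G$. I would therefore write the proof as simply: ``Immediate from Lemma~\ref{dcut_lem}(ii) and Theorem~\ref{mc_thm}, by subtracting the displayed lower bounds on $\Gamma(G)$ from $m$.''
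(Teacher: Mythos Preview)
Your proposal is correct and matches the paper's own approach exactly: the paper simply states that the corollary follows immediately by applying Theorem~\ref{mc_thm} to part (ii) of Lemma~\ref{dcut_lem}. Your explicit substitution and algebraic rearrangement are precisely what is meant there.
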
\mbox{}

The following lemma gives us an 
upper bound on $b_G$ from the perspective of the 
minimum incidence degree $\delta_I(G)$.

\begin{lemma}
For $G\in \cD(n,m)$, $\delta _{I}(G)\leq 2m/n$, and 
therefore, $b_G \leq 2m/n$.
\label{edgenumber_lem}
\end{lemma}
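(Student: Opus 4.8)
The plan is to derive the bound on $\delta_I(G)$ from a simple counting (averaging) argument on incidence degrees, and then to transfer it to $b_G$ via Lemma~\ref{dcut_lem}(i). First I would recall the identity $S_I(G) = \sum_{v \in V} d_I(v) = 2|E| - L_G = 2m - L_G$ established in Section~\ref{background}. Since each incidence degree is at least the minimum, we have $n\,\delta_I(G) \le S_I(G) = 2m - L_G \le 2m$, because $L_G \ge 0$. Dividing by $n$ gives $\delta_I(G) \le 2m/n$, which is the first assertion. (Note this is slightly weaker than the $2m/n - 1$ bound of Lemma~\ref{edge_lem}, because here $G$ is the representation of a coding scheme over $GF(q)$ with $q$ odd, so decodability no longer forces $L_G \ge 1$; a decodable graph over an odd field may be loopless, e.g.\ a triangle.)

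For the bound on $b_G$, I would simply invoke Lemma~\ref{dcut_lem}(i), which states $b_G \le \delta_I(G)$: deleting the $\delta_I(G)$ edges incident to a vertex $v$ of minimum incidence degree isolates $v$, and by Corollary~\ref{cordec} an isolated vertex is undecodable, so the resulting edge set is a D-cut. Combining with the first part yields $b_G \le \delta_I(G) \le 2m/n$, completing the proof.

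The argument is essentially routine; the only point requiring a little care is making sure one does not silently reuse the stronger $L_G \ge 1$ hypothesis from the even-characteristic setting of Lemma~\ref{edge_lem}, since in the odd case that hypothesis fails and accordingly the bound is the weaker $2m/n$ rather than $2m/n - 1$. So the main (minor) obstacle is just bookkeeping about which decodability criterion is in force, not any substantive difficulty.
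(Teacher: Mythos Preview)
Your proof is correct and essentially identical to the paper's own argument: both use $n\,\delta_I(G)\le S_I(G)=2m-L_G\le 2m$ to obtain $\delta_I(G)\le 2m/n$, and then pass to $b_G$ via $b_G\le\delta_I(G)$ from Lemma~\ref{dcut_lem}(i). Your additional remark distinguishing this from the even-characteristic Lemma~\ref{edge_lem} (where $L_G\ge 1$ is available) is accurate and a nice clarification the paper leaves implicit.
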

\begin{proof} 
Recall that the sum of 
the incidence degrees $S_I(G)$ in $G$ is given by $S_I(G)=2|E(G)|-L_G\leq 2|E(G)|$. Therefore, $ n \delta_I(G) \leq S_I(G) \leq 2m$, which 
implies $\delta_I(G)\leq 2m/n$. 
\end{proof}

\section{Lower Bounds on the Number of \\ Undecodable Graphs\label{undec_graph}}
We have discussed upper bounds of $b_G$, which counts the number of edges that must be deleted to produce a undecodable subgraph. 
In this section, we give lower bounds on the number of undecodable subgraphs $u_x^G$ of $G$ (or equivalently, upper bounds on the number of decodable subgraphs $c_x^G$ of $G$) for $x\ge b_G$. 

\subsection{For $GF(q)$ with $q$ even} 
We first present 
the following lemma which describes a sharp lower bound on $u_{b_G}^G$.

\begin{lemma}\label{lemub1}
Let $G$ be a decodable graph with $n$ vertices and $m$ edges.
Then for any $G \in {\cD(n,m)}$, we have 
$$u^G_{b_G} \geq b_G(n+1)+n-2m+1.$$ 
\end{lemma}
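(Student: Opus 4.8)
\emph{Proposed approach.}
The plan is to exhibit, explicitly, enough pairwise-distinct undecodable subgraphs obtained by deleting exactly $b_G$ edges, and to first recast the target bound in a more transparent form. Write $b=b_G$. The right-hand side equals $(b+1)(n+1)-2m$, and since $2m=2|E|=S_I(G)+L_G=\sum_{v}d_I(v)+L_G$, substituting yields
\[
  b(n+1)+n-2m+1 \;=\; b+n+1-D-L_G,\qquad D:=\sum_{v\in V}\bigl(d_I(v)-b\bigr).
\]
Recall from Remark~\ref{minloop_rem}(1) that $b\le\min(L_G,\delta_I(G))$; in particular $d_I(v)\ge b$ for every vertex $v$, so every summand of $D$ is non-negative, and also $L_G\ge b$. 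Setting $n_b:=|\{v\in V: d_I(v)=b\}|$, I would record the elementary inequality $D\ge n-n_b$, since each of the $n-n_b$ vertices with $d_I(v)>b$ contributes at least $1$ to $D$.

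\emph{Building the undecodable subgraphs.}
For each vertex $v$ with $d_I(v)=b$, deleting the (exactly $b$) edges incident with $v$ isolates $v$; by Theorem~\ref{q2_thm}(1) the resulting subgraph is undecodable, as its component $\{v\}$ has no loop. This gives $n_b$ undecodable $b$-edge-deletions, and I would check they are pairwise distinct: if the edge sets incident with two distinct vertices $u,v$ coincided, then $u$ and $v$ would be joined by exactly $b$ parallel edges with no other incident edges and no loops, so $\{u,v\}$ would be a loop-free connected component of $G$ — impossible for a decodable $G$ over $GF(q)$ with $q$ even. Hence $u^G_{b_G}\ge n_b$.

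\emph{Case analysis and conclusion.}
If $L_G\ge b+1$, then plugging $D\ge n-n_b$ and $L_G\ge b+1$ into $b+n+1-D-L_G$ gives $b+n+1-D-L_G\le n_b\le u^G_{b_G}$, as required. The only remaining case is $L_G=b$, since always $b\le L_G$. Here I would adjoin one more undecodable $b$-deletion: removing all $L_G=b$ loops leaves a loop-free subgraph, hence undecodable by Theorem~\ref{q2_thm}(1). I would then argue this deletion differs from each of the $n_b$ vertex-isolating ones: equality would force some vertex to carry all $b$ loops and no other incident edge, making $\{v\}$ a component and leaving the remaining (non-empty, since $n\ge2$) vertices as loop-free components, again contradicting decodability. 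Thus $u^G_{b_G}\ge n_b+1$, while $b+n+1-D-L_G=n+1-D\le n_b+1$, which closes this case.

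\emph{Main obstacle.}
The arithmetic is routine once the bound is rewritten as $b+n+1-D-L_G$; the real work is the combinatorial bookkeeping of the last two paragraphs — verifying the constructed deletions are genuinely distinct, and noticing that the two invocations of decodability ($q$ even $\Rightarrow$ every component has a loop) are precisely what excludes the two possible coincidences. One should also sanity-check the degenerate case $n_b=0$ (where the right-hand side is $\le 1$, matched by the all-loops deletion when $L_G=b$) and confirm sharpness on a small example such as the two-vertex graph with one connecting edge and a loop at each vertex, where equality holds.
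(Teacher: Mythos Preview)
Your proof is correct and follows essentially the same approach as the paper's: both count the vertices of incidence degree $b_G$ (your $n_b$ is the paper's $\alpha$), bound this number via the incidence-degree sum, and in the boundary case $L_G=b_G$ adjoin the all-loops deletion to pick up the extra $+1$. Your explicit verification that the constructed deletions are pairwise distinct is a welcome addition that the paper leaves implicit in its ``Clearly $u^G_{b_G}\ge\alpha$''.
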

\begin{proof} 
First recall $b_G \leq \min(L_G,\delta_I(G))$.
Let $\alpha $ be the number of vertices with incidence degree $b_G$ and let
$\beta$ be the number of vertices with incidence degree at least $b_G+2$. 
Then by considering the sum of incidence degrees, we have 
$$b_G\alpha+(b_G+1)(n-\alpha-\beta)+(b_G+2)\beta\leq 2m-L_G,$$ 
which implies that
$$\alpha \geq L_G+ nb_G+n -2m + \beta \geq b_G(n+1) + n -2m,$$ 
since $L_G \geq b_G$ and $\beta \geq 0$.
Clearly $u_{b_G}^G \geq \alpha $, since deleting any $b_G$ edges incident with a vertex
of incidence degree $b_G$ results in an undecodable graph.
Since $G$ is decodable, no vertex of incidence degree $b_G$ is incident with all loops of $G$. 
Therefore, if $\alpha=b_G(n+1) + n -2m$, then $L_G=b_G$ and so 
$u_{b_G}^G \ge \alpha +1$ 
since we also have to count the case of deleting all $L_G$ loops from $G$. If $\alpha > b_G(n+1) + n -2m$, the result follows trivially.
\end{proof}

We can also show a tight lower bound of $u^G_{b_G+y}$ when $y$ is small. 
We first note the following lemma which can be used in the latter lemma.

\begin{lemma}\label{lemub1eq}
Let $G\in \cD(n,m)$ satisfy
$u^G_{b_G} = b_G(n+1)+n-2m+1$. Then, with the same notation as in Lemma \ref{lemub1}, $\beta = 0$ and either
\begin{enumerate}
   \item
       $\alpha = b_G(n+1)+n-2m$ and $L_G=b_G$, or
   \item
       $\alpha =b_G(n+1) + n - 2m +1$ and $L_G = b_G+1$.
\end{enumerate} 
\end{lemma}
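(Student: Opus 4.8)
The plan is to revisit the chain of inequalities in the proof of Lemma~\ref{lemub1} and track exactly when each one is an equality. Recall that in that argument we set $\alpha$ to be the number of vertices of incidence degree exactly $b_G$ and $\beta$ the number of vertices of incidence degree at least $b_G+2$, so that the remaining $n-\alpha-\beta$ vertices have incidence degree $b_G+1$ (every vertex has incidence degree at least $\delta_I(G)\ge b_G$). The proof of Lemma~\ref{lemub1} then used two estimates: first the sum-of-incidence-degrees inequality
$$b_G\alpha+(b_G+1)(n-\alpha-\beta)+(b_G+2)\beta\leq 2m-L_G,$$
which rearranges to $\alpha\ge L_G+nb_G+n-2m+\beta$; and second the bound $L_G\ge b_G$ together with $\beta\ge0$, which gave $\alpha\ge b_G(n+1)+n-2m$; and finally the extra "$+1$" from counting the all-loops D-cut when $L_G=b_G$. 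So the conclusion $u^G_{b_G}\ge b_G(n+1)+n-2m+1$ already packages in one "$+1$" that comes \emph{either} from the slack $L_G-b_G\ge1$ \emph{or} from the all-loops deletion in the case $L_G=b_G$.

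First I would argue $\beta=0$. Suppose $\beta\ge1$. Then from $\alpha\ge L_G+nb_G+n-2m+\beta\ge b_G+nb_G+n-2m+1=b_G(n+1)+n-2m+1$ we would get $u^G_{b_G}\ge\alpha\ge b_G(n+1)+n-2m+1$ just from counting the $\alpha$ single-vertex D-cuts; but then, if additionally $L_G=b_G$, the all-loops D-cut is a further undecodable subgraph not of that form, forcing $u^G_{b_G}\ge\alpha+1>b_G(n+1)+n-2m+1$, contradicting the hypothesis — so in that subcase we must have $L_G\ge b_G+1$, and then $\alpha\ge L_G+nb_G+n-2m+\beta\ge (b_G+1)+nb_G+n-2m+1=b_G(n+1)+n-2m+2$, again contradicting the hypothesis. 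Hence $\beta=0$. (I would also need to confirm that the $\alpha$ vertices of incidence degree $b_G$ genuinely give $\alpha$ \emph{distinct} undecodable subgraphs and that none of them coincides with the all-loops subgraph; distinctness of the $\alpha$ subgraphs holds because a subgraph obtained by deleting all $b_G$ edges at $v$ records which vertex was isolated, and non-coincidence with the all-loops deletion was already observed in Lemma~\ref{lemub1} — no vertex of incidence degree $b_G$ is incident with all of $G$'s loops.)

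With $\beta=0$ in hand, the sum-of-incidence-degrees relation becomes $b_G\alpha+(b_G+1)(n-\alpha)\le 2m-L_G$, i.e. $\alpha\ge L_G+nb_G+n-2m$. Combined with $u^G_{b_G}\ge\alpha$, plus the extra $+1$ when $L_G=b_G$, the equality $u^G_{b_G}=b_G(n+1)+n-2m+1$ forces exactly one of two tight configurations: either $L_G=b_G$, the all-loops D-cut is counted, and $\alpha=b_G(n+1)+n-2m$ (case 1); or $L_G\ge b_G+1$, in which case $\alpha\ge L_G+nb_G+n-2m\ge b_G(n+1)+n-2m+1=u^G_{b_G}\ge\alpha$ forces $\alpha=b_G(n+1)+n-2m+1$ and $L_G=b_G+1$ (case 2). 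I would also note that in case 2 we cannot additionally have the all-loops subgraph as a "new" undecodable $b_G$-subgraph unless $L_G=b_G$, which is excluded, so no further $+1$ is lost — this is the consistency check that makes case 2 actually achieve equality.

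The main obstacle I anticipate is the careful bookkeeping of \emph{which} undecodable $b_G$-edge-subsets are being counted and whether they are distinct: one must be sure that the $\alpha$ "isolate-a-degree-$b_G$-vertex" deletions, and the "delete-all-loops" deletion (relevant only when $L_G=b_G$, since only then does it remove exactly $b_G$ edges), are pairwise distinct and are the \emph{only} contributions that the equality hypothesis constrains — equivalently, that $u^G_{b_G}$ equals exactly this count and not more. Everything else is the routine rearrangement of the single linear inequality on incidence degrees together with the two boundary cases $L_G=b_G$ versus $L_G>b_G$.
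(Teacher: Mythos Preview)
Your proposal is correct and follows essentially the same route as the paper: both arguments revisit the inequality chain $\alpha \geq L_G+nb_G+n-2m+\beta \geq b_G(n+1)+n-2m$ from Lemma~\ref{lemub1} and pin down where equality must hold. The only organizational difference is that you first establish $\beta=0$ (by splitting on $L_G=b_G$ versus $L_G\geq b_G+1$) and then derive the two cases, whereas the paper first bounds $\theta\leq\alpha\leq\theta+1$, splits on $\alpha$, and then eliminates the spurious subcase $(\beta,L_G)=(1,b_G)$ via the decodability observation that no degree-$b_G$ vertex can carry all the loops---the same fact you invoke to ensure the all-loops deletion is distinct from the vertex-isolation deletions.
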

\begin{proof} 
Let $\theta = b_G(n+1)+n-2m$. Recall that, as in the proof of Lemma \ref{lemub1},
\begin{equation}\label{eqalpha}
\alpha \geq L_G+ nb_G+n -2m + \beta \geq b_G(n+1) + n -2m =\theta.
\end{equation}
Therefore, 
$$ \theta + 1 = u^G_{b_G} \geq \alpha \geq \theta,$$
so either $\alpha = \theta$, in which case $u^G_{b_G}= \alpha+1$,
or $\alpha = \theta + 1$ and $u^G_{b_G}=\alpha$.  

For the case $\alpha = \theta$, from (\ref{eqalpha}), we must have $L_G=b_G$ and $\beta = 0$. 
For the case  $\alpha = \theta + 1$, we have
$$\alpha = \theta+1 \geq L_G+ nb_G+n -2m + \beta=\theta - b_G+L_G + \beta,$$
which gives 
$b_G+1 \geq L_G + \beta \geq b_G+ \beta$.
Therefore, either $\beta = 1$ and $L_G=b_G$ or $\beta = 0$ and $L_G=b_G+1$.
Since for $\alpha=\theta+1$ we have $u^G_{b_G}=\alpha$, every undecodable subgraph of $G$ found by deleting $b_G$ edges is constructed by deleting the $b_G$ edges that meet a vertex of incidence degree $b_G$.
If $L_G=b_G$, then $G$ has a vertex of incidence degree $b_G$ that is incident with every loop of $G$, contradicting the decodability of $G$. We deduce that $\beta = 0$ and $L_G=b_G+1$.
\end{proof}

Using Lemma~\ref{lemub1eq}, we obtain the following 
lemma giving a lower bound on $u^G_{b_G+y}$ for small values of 
$y$. Recall from Section~\ref{background} that 
$\Omega(G)$ denotes the maximum multiplicity of any edge joining a pair of vertices of $G$
and $\Delta_L(G):=\max\{ d_L(v) : v \in V\}$, where $d_L(v)$ is the number of loops incident with vertex $v$. 

\begin{lemma}\label{lemub2}
Let $G$ be a graph satisfying the hypothesis of Lemma \ref{lemub1eq}.
Let $\theta = b_G(n+1)+n-2m$.
Then
$$u_{b_G+y}^G\geq (\theta + 1)\binom{m-b_G}{y} + (n-\theta)\binom{m-b_G-1}{y-1},$$
for any $y$ satisfying $1 \leq y \leq b_G-\max(\Omega(G), \Delta_L(G))-1$. 
\end{lemma}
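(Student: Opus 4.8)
The plan is to count undecodable subgraphs of $G$ obtained by deleting $b_G+y$ edges, by partitioning them according to which vertex of minimum incidence degree is isolated (or whether all loops are deleted), and then to control overcounting so that the union bound becomes exact up to the stated claim. First I would invoke Lemma~\ref{lemub1eq} to fix the structure of $G$: either (case 1) $\alpha=\theta$ with $L_G=b_G$, or (case 2) $\alpha=\theta+1$ with $L_G=b_G+1$; in both cases $\beta=0$, so every vertex has incidence degree $b_G$ or $b_G+1$, and the number of degree-$b_G$ vertices is $\alpha\in\{\theta,\theta+1\}$. Let $\cU$ denote the collection of those vertices $v$ with $d_I(v)=b_G$, together with (in case 1 only) the distinguished ``loop-set'' $\Lambda$ consisting of all $b_G$ loops of $G$; note $|\cU|=\theta+1$ in both cases, which is where the coefficient $\theta+1$ in the bound originates.

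Next, for each $v\in\cU$ of incidence-degree type, I would count the $(b_G+y)$-edge deletions that isolate $v$: we must delete all $b_G$ edges at $v$ and then any further $y$ edges from the remaining $m-b_G$ edges, giving $\binom{m-b_G}{y}$ such subgraphs; each is undecodable because $v$ becomes isolated. For the distinguished loop-set $\Lambda$ in case 1, deleting all $b_G$ loops and then any $y$ of the remaining $m-b_G$ edges again gives $\binom{m-b_G}{y}$ undecodable subgraphs. This already accounts for the term $(\theta+1)\binom{m-b_G}{y}$. For the extra term $(n-\theta)\binom{m-b_G-1}{y-1}$: since $\alpha\le\theta+1$, there are at least $n-\theta-1$ vertices (in case 2, $n-\alpha=n-\theta-1$ of them, all of incidence degree $b_G+1$; adjust the bookkeeping slightly in case 1) of incidence degree $b_G+1$, and isolating such a vertex requires deleting all $b_G+1$ of its edges plus $y-1$ more from the remaining $m-b_G-1$, contributing $\binom{m-b_G-1}{y-1}$ each — and I would check the arithmetic gives a clean coefficient $n-\theta$ once the degree-$b_G$ and degree-$(b_G+1)$ counts are combined with the loop-set term. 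The subtraction-free lower bound then follows provided these families are \emph{pairwise disjoint}.

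The main obstacle — and the reason for the hypothesis $1\le y\le b_G-\max(\Omega(G),\Delta_L(G))-1$ — is precisely this disjointness. Two distinct ``isolating'' deletions, one isolating $u$ and one isolating $v$, coincide only if the two sets of deleted edges are equal; since isolating $u$ forces deletion of exactly the $\le b_G+1$ edges at $u$ together with $y-1$ or $y$ others, a collision would force all edges at $u$ to lie among the edges at $v$ plus a small slack, which is impossible once $y$ is small enough relative to the overlap between the edge-neighbourhoods of $u$ and $v$. That overlap is at most $\Omega(G)$ (if $u,v$ are adjacent, they share at most $\Omega(G)$ parallel edges) and the loop-set $\Lambda$ overlaps the edges at any single vertex in at most $\Delta_L(G)$ loops; so imposing $y\le b_G-\max(\Omega(G),\Delta_L(G))-1$ guarantees that no $(b_G+y)$-deletion can simultaneously isolate two members of $\cU$ or isolate a vertex while also being the all-loops deletion. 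I would make this quantitative by a short counting argument: if a deletion $D$ with $|D|=b_G+y$ isolates both $u$ and $v$, then $D$ contains all $d_I(u)+d_I(v)-|E(u)\cap E(v)|\ge 2b_G-\Omega(G)$ edges incident with $\{u,v\}$, forcing $b_G+y\ge 2b_G-\Omega(G)$, i.e. $y\ge b_G-\Omega(G)$, contradicting the hypothesis; the loop-set case is analogous with $\Delta_L(G)$ in place of $\Omega(G)$. With disjointness established, summing the $|\cU|=\theta+1$ families of size $\binom{m-b_G}{y}$ and the $n-\theta$ families of size $\binom{m-b_G-1}{y-1}$ yields the claimed inequality.
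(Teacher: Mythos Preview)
Your approach is essentially the paper's: build undecodable subgraphs by (i) isolating a degree-$b_G$ vertex and then deleting $y$ further edges, (ii) isolating a degree-$(b_G+1)$ vertex and then deleting $y-1$ further edges, and (iii) deleting all loops and then deleting the rest arbitrarily; argue these families are pairwise disjoint using the hypothesis $y\le b_G-\max(\Omega(G),\Delta_L(G))-1$; and add. Your disjointness argument is correct and in fact spells out the role of $\Omega(G)$ more explicitly than the paper does.

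Your bookkeeping for the loop-set, however, is off in one of the two cases. You place $\Lambda$ into $\cU$ \emph{only} in case~1 ($L_G=b_G$), and then wave at ``checking the arithmetic'' for the second coefficient. But in case~2 ($\alpha=\theta+1$, $L_G=b_G+1$) there are only $n-\alpha=n-\theta-1$ vertices of incidence degree $b_G+1$, so families (i) and (ii) alone yield
\[
(\theta+1)\binom{m-b_G}{y}+(n-\theta-1)\binom{m-b_G-1}{y-1},
\]
which is short of the claim by one term $\binom{m-b_G-1}{y-1}$. The missing piece is precisely family (iii) in case~2: deleting all $L_G=b_G+1$ loops and then $y-1$ further edges contributes an additional $\binom{m-b_G-1}{y-1}$, and this family is disjoint from (i) and (ii) by the same $\Delta_L(G)$ overlap argument you already gave. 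So the loop-deletion family must be invoked in \emph{both} cases of Lemma~\ref{lemub1eq}: it behaves like a ``degree-$b_G$ object'' in case~1 (feeding the first term) and like a ``degree-$(b_G+1)$ object'' in case~2 (feeding the second term). With that correction both cases collapse to the stated bound, exactly as in the paper's proof.
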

\begin{proof} 
Let $y \in \{1,2, \ldots, b_G-\mu\}$, where $\mu=\max(\Omega(G)+1, \Delta_L(G)+1)$. Consider the following operations, each of which results in an undecodable subgraph of $G$ with $m-b_G-y$ edges.
\begin{enumerate}
   \item
      Delete $b_G$ edges incident with a vertex of incidence degree $b_G$ and delete a further 
      $y$ edges arbitrarily.      
   \item
      Delete $b_G+1$ edges incident with a vertex of incidence degree $b_G+1$ and delete a further $y-1$ edges arbitrarily. 
   \item
      Delete all $L_G$ loops of $G$, and then delete a further $b_G+y-L_G$ edges arbitrarily.   
\end{enumerate}
Observe first there are exactly $\alpha\binom{m-b_G}{y}$ (respectively $(n-\alpha)\binom{m-b_G-1}{y-1}$) ways to produce an undecodable subgraph by the operation 1) (respectively, by the operation 2)). 

The operations 1) and 2) are mutually exclusive, since in 1) at most $y \leq b_G-1$ edges are deleted from a vertex of incidence degree $b_G+1$. Moreover, the operations
2) and 3) are exclusive to each other, since 
when $b_G-y$ edges are deleted so that a vertex $v$ of incidence degree $b_G+1$ is isolated,
at most $d_L(v) + y - 1$ loops can be deleted and
\begin{eqnarray*}
d_L(v) + y - 1 &\leq& d_L(v) + (b_G-\Delta_L(G)-1) - 1\\
&=&b_G - (\Delta_L(G)-d_L(v)) - 2 \\
&\leq& b_G- 2 < L_G.
\end{eqnarray*}
Similarly, 1) and 3) are exclusive, since when a vertex $v$ of incidence degree $b_G$ is isolated,
at most $d_L(v)-y$ loops can be deleted and $d_L(v)-y\le b_G-1 <L_G$. 


It follows that 
\begin{eqnarray*}
u_{b_G+y}^G & \geq & \alpha \binom{m-b_G}{y}+(n-\alpha)\binom{m-b_G-1}{y-1} \\
             &  +   & \binom{m-L_G}{b_G+y-L_G}, 
\end{eqnarray*}
 which yields for any $G \in \cD(n,m)$,
$$u_{b_G+y}^G\geq (\theta + 1)\binom{m-b_G}{y} + (n-\theta)\binom{m-b_G-1}{y-1}.$$
\end{proof}

For given $u^G_{x}$, we can compute 
a lower bound on $u^G_{x+z}$ for $z\geq 0$ by using the following easy result.

\begin{lemma}\label{ub_lem}
   Let $G$ be a graph with $n$ vertices and $m$ edges.
   Then
   $$u^G_{x+z} \geq u^G_{x} \binom{m-x}{z} / \binom{x+z}{z}$$ 
   for any $z \geq 0$. 
\end{lemma}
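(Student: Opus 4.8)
The plan is to prove the inequality $u^G_{x+z} \geq u^G_x \binom{m-x}{z}/\binom{x+z}{z}$ by a double-counting argument on pairs consisting of an undecodable subgraph obtained by deleting $x$ edges, together with a choice of $z$ further edges to delete. First I would fix an undecodable subgraph $H$ of $G$ obtained by deleting some $x$-set $D$ of edges, so $H$ has $m-x$ edges remaining. Deleting any further $z$ edges from $H$ yields a subgraph of $G$ on $m-x-z$ edges which is still undecodable: this is the key monotonicity observation — decodability requires that the incidence matrix have full rank $n$, and deleting rows (edges) can only decrease the rank, so once a subgraph is undecodable, every further edge-deletion leaves it undecodable. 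Equivalently, every connected component failing the condition of Theorem~\ref{q2_thm} continues to fail it after more edges are removed (removing edges cannot create a loop or an odd cycle, and cannot merge components).

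Next I would set up the count. Let $\cU_x$ denote the set of undecodable subgraphs of $G$ obtained by deleting exactly $x$ edges, so $|\cU_x| = u^G_x$, and similarly $|\cU_{x+z}| = u^G_{x+z}$. Consider the bipartite incidence between $\cU_x$ and $\cU_{x+z}$ where $H \in \cU_x$ is joined to $H' \in \cU_{x+z}$ whenever $H'$ is obtained from $H$ by deleting a further $z$-set of edges (equivalently, $E(H') \subset E(H)$). By the monotonicity just observed, each $H \in \cU_x$ has exactly $\binom{m-x}{z}$ neighbours $H'$, all of which lie in $\cU_{x+z}$; hence the number of incident pairs is exactly $u^G_x \binom{m-x}{z}$. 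On the other side, a given $H' \in \cU_{x+z}$ (with $m-x-z$ edges) arises from some $H \in \cU_x$ precisely by choosing which $z$ of the $x+z$ deleted edges were the ``extra'' ones; there are at most $\binom{x+z}{z}$ such $H$ (not all of them need be undecodable, so this is an upper bound on the degree of $H'$). Counting incident pairs the other way gives $u^G_x \binom{m-x}{z} \leq u^G_{x+z}\binom{x+z}{z}$, which rearranges to the claimed bound.

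The argument is almost entirely routine; the one point that needs care — and is really the crux — is the monotonicity claim that undecodability is preserved under further edge deletion. I would justify it cleanly via Lemma~\ref{lemdh}: $H$ undecodable means $\operatorname{rank}_{GF(q)} B_H < n$, and $B_{H'}$ is a submatrix of $B_H$ obtained by deleting rows, so $\operatorname{rank}_{GF(q)} B_{H'} \leq \operatorname{rank}_{GF(q)} B_H < n$, whence $H'$ is undecodable. This sidesteps any case analysis on $q$ and on loops versus odd cycles. The only other thing to verify is the bookkeeping that the right side of the bipartite count is an inequality rather than an equality (because an $H'\in\cU_{x+z}$ might be reachable only from decodable intermediate subgraphs), but since we want a lower bound on $u^G_{x+z}$ this direction of the inequality is exactly what we need.
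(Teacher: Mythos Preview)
Your proposal is correct and follows essentially the same double-counting argument as the paper's own proof: count pairs $(H,H')$ with $H\in\cU_x$, $H'\in\cU_{x+z}$, and $E(H')\subset E(H)$, using that each $H$ contributes $\binom{m-x}{z}$ pairs while each $H'$ contributes at most $\binom{x+z}{z}$. The paper's version is terser and leaves the monotonicity of undecodability implicit; your explicit justification via Lemma~\ref{lemdh} (rank cannot increase under row deletion) is a clean addition but not a different idea.
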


\begin{proof}
From a undecodable graph of $m-x$ edges, 
we can generate $\binom{m-x}{z}$ undecodable subgraphs of $m-x-z$ edges.
On the other hand, given a undecodable graph $K$ of $m-x-z$ edges,
there are at most $\binom{x+z}{z}$ undecodable graphs of $m-x$ edges 
which have $K$ as a subgraph. 
Hence, there are at least  $u^G_{x} \binom{m-x}{z} / \binom{x+z}{z}$ 
undecodable graphs of $m-x-z$ edges. 
\end{proof}

The following corollary is now immediate.

\begin{corollary}
Let $G \in \cD(n,m)$ satisfy the hypothesis of Lemma \ref{lemub1eq}. Let $\mu=\max(\Omega(G)+1, \Delta_\ell(G)+1)$
Then for each $z \geq 0$, we have 
\begin{eqnarray*}
u^G_{2b_G-\mu+z}  \geq 
u^G_{2b_G-\mu} \frac{\binom{m-2b_G + \mu}{z}} { \binom{2b_G-\mu+z}{z}}.
\end{eqnarray*} 
\end{corollary}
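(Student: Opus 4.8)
The plan is simply to chain Lemma~\ref{lemub2} with Lemma~\ref{ub_lem}. Write $\mu = \max(\Omega(G)+1,\Delta_L(G)+1) = \max(\Omega(G),\Delta_L(G))+1$, so that the right endpoint of the admissible range for $y$ in Lemma~\ref{lemub2}, namely $b_G-\max(\Omega(G),\Delta_L(G))-1$, is precisely $b_G-\mu$. Since $G$ satisfies the hypothesis of Lemma~\ref{lemub1eq}, it also satisfies the hypothesis required by Lemma~\ref{lemub2}, and I would apply that lemma with the specific choice $y = b_G-\mu$. (The corollary is of interest only when $b_G > \mu$, in which case $y \in \{1,2,\ldots,b_G-\max(\Omega(G),\Delta_L(G))-1\}$ as demanded; if $b_G \le \mu$ the bound degenerates and there is nothing to prove.) Setting $\theta = b_G(n+1)+n-2m$, this yields
$$u^G_{2b_G-\mu} = u^G_{b_G+(b_G-\mu)} \geq (\theta+1)\binom{m-b_G}{b_G-\mu} + (n-\theta)\binom{m-b_G-1}{b_G-\mu-1},$$
so in particular $u^G_{2b_G-\mu}$ is bounded below by a nonnegative quantity.

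The second and final step is to feed $x = 2b_G-\mu$ into Lemma~\ref{ub_lem}, which applies to any graph on $n$ vertices and $m$ edges and any $z\geq 0$. It gives immediately
$$u^G_{(2b_G-\mu)+z} \geq u^G_{2b_G-\mu}\,\frac{\binom{m-(2b_G-\mu)}{z}}{\binom{(2b_G-\mu)+z}{z}} = u^G_{2b_G-\mu}\,\frac{\binom{m-2b_G+\mu}{z}}{\binom{2b_G-\mu+z}{z}},$$
which is exactly the claimed inequality.

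There is essentially no obstacle beyond bookkeeping. The one point that must be checked with care is that the exponent $y=b_G-\mu$ sits exactly at the upper boundary of the range permitted by Lemma~\ref{lemub2}; this is forced by the definition of $\mu$. One should also note that when $z$ is so large that $m-2b_G+\mu < z$, both sides of the asserted inequality vanish, so the statement holds trivially in that regime. No new graph-theoretic input is needed: the corollary is a direct composition of the two preceding lemmas, hence ``immediate''.
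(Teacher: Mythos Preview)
Your proposal is correct and matches the paper's approach: the paper simply states the corollary is ``immediate'' from the preceding lemmas, and indeed the inequality is nothing but Lemma~\ref{ub_lem} specialized to $x=2b_G-\mu$. Your invocation of Lemma~\ref{lemub2} in the first step is not strictly required to establish the displayed inequality (which involves only $u^G_{2b_G-\mu}$ on the right, not an explicit numerical bound), but it correctly explains why the hypothesis of Lemma~\ref{lemub1eq} and the particular index $2b_G-\mu$ appear in the statement.
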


\subsection{For $GF(q)$ with $q$ odd} 
Finding good lower bounds on $u_{x}^G$ can be a hard task since 
not only loops but also odd cycles must be taken into consideration 
for determining whether a given graph is decodable or not. We have, up to this moment, 
the following sharp lower bound on $u^G_{2r}$ for some special class of graphs.

\begin{lemma}
Let $G$ be a graph in $\cD(n,rn)$, 
where $r\geq 2$ and $n\geq4$, with the minimum incidence 
degree $\delta_I(G)=2r$. Then we have
$$u_{2r}^G\geq n.$$
\label{upper_lem}
\end{lemma}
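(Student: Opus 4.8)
I want to show that if $G \in \cD(n, rn)$ with $r \ge 2$, $n \ge 4$, and $\delta_I(G) = 2r$, then at least $n$ undecodable subgraphs arise by deleting exactly $2r$ edges. The natural source of undecodable subgraphs is isolation of a single vertex: pick a vertex $v$ of incidence degree $2r$ and delete all $2r$ edges meeting it; by Corollary~\ref{cordec} (odd $q$) an isolated vertex is undecodable, so the resulting subgraph is undecodable. So if $G$ has at least $n$ vertices of incidence degree exactly $2r$, we are done immediately, since distinct minimum-degree vertices give distinct edge-sets to delete (the edges incident to $v$ determine $v$, unless two vertices are joined to each other and nothing else — a degenerate case I'll have to rule out or handle separately). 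Thus the first step is a counting argument: let $\alpha$ be the number of vertices with $d_I(v) = 2r$. Since $\sum_{v} d_I(v) = 2m - L_G = 2rn - L_G \le 2rn$, and every vertex has $d_I(v) \ge \delta_I(G) = 2r$, we get $2r\alpha + (2r+1)(n - \alpha) \le 2rn$, i.e. $\alpha \ge n - L_G$. Hmm — this only gives $\alpha \ge n$ when $L_G = 0$, which cannot happen for a decodable graph. So pure degree-counting is not enough, and the loops must be handled separately.

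**Handling the loop deficit.** The counting above shows $\alpha \ge n - L_G$, so I am short by at most $L_G$ vertices. The idea is to recover these from loop-based constructions. Each loop at a vertex $w$ contributes to $d_I(w)$; if $w$ carries a loop and has incidence degree exactly $2r$, deleting its $2r$ incident edges still isolates it, so it's already counted in $\alpha$. The subtle case is vertices whose incidence degree exceeds $2r$ only because of loops. I would refine the count by tracking, for each vertex, its incidence degree versus $2r$, and argue that the "excess" $\sum_v (d_I(v) - 2r) = L_G$ (using $\sum d_I(v) = 2rn - L_G$... wait, that gives $\sum(d_I(v) - 2r) = -L_G < 0$, which is impossible since each term is $\ge 0$). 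This forces $L_G = 0$ — contradiction with decodability — unless I've miscounted. Let me recompute: $\sum_v d_I(v) = 2|E| - L_G = 2rn - L_G$, and $\sum_v 2r = 2rn$, so $\sum_v(d_I(v) - 2r) = -L_G$. Since $d_I(v) \ge 2r$ for all $v$, each summand is nonnegative, forcing $L_G \le 0$, hence $L_G = 0$. But $L_G = 0$ contradicts decodability for even $q$ — however for \emph{odd} $q$ decodability only requires each component to have an odd cycle, so $L_G = 0$ is perfectly fine! So in fact every vertex has $d_I(v) = 2r$ exactly, giving $\alpha = n$, and the $n$ vertex-isolations are the $n$ undecodable subgraphs.

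**The remaining obstacle.** The one gap is distinctness: deleting the edge-set incident to $v$ versus the edge-set incident to $v'$ could coincide if every edge meeting $v$ also meets $v'$ and vice versa — i.e., $v$ and $v'$ are joined only to each other by all $2r$ edges (a $2r$-fold multi-edge forming the whole component $\{v,v'\}$). That component, being bipartite (two vertices, all edges between them, no loops), is undecodable — contradicting $G \in \cD(n, rn)$. Hence no such degenerate pair exists, all $n$ edge-sets are distinct, and $u_{2r}^G \ge n$. So the main thing to get right is the bookkeeping that pins down $L_G = 0$ and $d_I(v) = 2r$ for all $v$ from the regularity forced by $m = rn$ and $\delta_I(G) = 2r$; everything else is a short verification using Corollary~\ref{cordec} and the bipartiteness of a lone multi-edge component. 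I expect the write-up to spend most of its length on carefully stating the distinctness argument and invoking decodability to kill the degenerate configurations, rather than on any real computation.
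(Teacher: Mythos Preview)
Your argument is correct and, in fact, cleaner than the paper's. Both proofs begin by forcing $2r$-regularity: since $\sum_v d_I(v)=2|E|-L_G=2rn-L_G$ while each $d_I(v)\ge 2r$, one gets $L_G\le 0$, hence $L_G=0$ and every vertex has incidence degree exactly $2r$. From here the paper splits into two cases according to whether $b_G=2r$ or $b_G<2r$: in the first it isolates each vertex (without addressing distinctness), and in the second it takes a single minimum D-cut of size $b_G$ and extends it in $\binom{rn-b_G}{2r-b_G}$ ways, then checks $\binom{rn-b_G}{2r-b_G}\ge rn-b_G>rn-2r\ge n$ using the hypotheses $r\ge 2$, $n\ge 4$.

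Your route avoids the case split entirely: vertex isolation works uniformly, and you supply the distinctness verification the paper omits (two vertices yielding the same deleted edge-set would form a $2r$-fold multi-edge component, bipartite and hence undecodable for odd $q$, contradicting $G\in\cD(n,rn)$). As a bonus, your argument never invokes $r\ge 2$ or $n\ge 4$; those hypotheses are only needed for the paper's Case~2 inequality chain. What the paper's Case~2 buys is a quantitatively stronger count when $b_G<2r$, but since the lemma only asks for $u_{2r}^G\ge n$, your simpler argument suffices.
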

\begin{proof}
First observe that if $\delta_I(G)=2r$, 
then each vertex has incidence degree $2r$ 
(\emph{i.e.} $G$ is a $2r$-regular graph). Indeed, 
if there exists a vertex with incidence degree 
strictly greater than $2r$, then 
we have $n\delta_I(G)<S_I(G)$, which is a contradiction since 
$2rn=n\delta_I(G)\le S_I(G) \le 2|E(G)|=2rn$. 

If $b_G=\delta_I(G)=2r$, then 
it is straightforward to check that 
$u_{2r}^G\geq n$ since for each vertex $v$ in $G$, deleting $2r$ edges attached to $v$ in $G$ generates an undecodable graph. Furthermore, if $b_G<\delta_I(G)=2r$, then we can construct
at least $\binom{rn-b_G}{2r-b_G}$ 
undecodable graphs with $rn-2r$ edges
by first generating an undecodable graph $K$ consisting of 
$rn-b_G$ edges, 
and then deleting another $2r-b_G$ edges arbitrarily from the 
edges in $K$. 
Since 
$$\binom{rn-b_G}{2r-b_G}\geq rn-b_G>rn-2r\geq n$$ holds 
by assumption on $n$ and $r$,
we have $u_{2r}^G\geq n$.
\end{proof}

\section{Encoding Schemes\label{candidate}}
In this section, we
introduce two algorithms that produce graphs meeting the bounds derived in the previous sections. 

Algorithm~\ref{interencoding_algo} yields an optimal coding scheme for $GF(q)$ with $q$ even. 
Note that the subscripts $i$ of the packets $p_i$ are computed modulo $n$ in what follows,
unless explicitly stated otherwise.

\begin{algorithm}[htb]
\caption {: A coding scheme for a system with $n$ packets
$p_{1}, p_2, \ldots, p_{n}$ and redundancy $r$.}
\begin{algorithmic}[1]
\REQUIRE {Let $k$ be a proper divisor of $n$; so $n=sk$ for some integer $s> 1$.}
\STATE Prepare a $k\times rs$ list such that  
its $(a,b)$ entry is $f_{b+(a-1)rs}$ with the column-wise order $h_c(a,b)=a+(b-1)k$.
\FORALL  {$(a,b)$ with $1\le a \le k, 1\le b \le rs$}
\IF {$1\le h_c(a,b)\le L_G $ }
	\STATE set $f_{b+(a-1)rs}:=p_{b+(a-1)s}$	
	\ELSIF {$b\le s-1$}
		\STATE set $f_{b+(a-1)rs}:=p_{b+(a-1)s}+ p_{b+as}$.
	\ELSIF {$s \le b\le rs-1$}
		\STATE set $f_{b+(a-1)rs}:=p_{b+(a-1)s}+ p_{b+(a-1)s+1}$.
	\ELSIF {$b= rs$}	
		\STATE set $f_{b+(a-1)rs}:=p_{b+(a-1)s}+ p_{1+(a-1)s}$.
	\ENDIF	
\ENDFOR
\RETURN {$\cC=\{f_{1}, f_{2}, \ldots, f_{rn}\}$ as a coding scheme.}
\end{algorithmic}
\label{interencoding_algo}
\end{algorithm}

\begin{example}
Consider the case of 9 packets, redundancy 2.
Suppose that $k=3$ and $L_G=4$ in Algorithm~\ref{interencoding_algo}. 

First prepare a $3\times 6$ list (together with the column-wise orders in brackets)  as shown below. 
\begin{table}[htb]
\begin{center}
\scriptsize
\begin{tabular}{|c|c|c|c|c|c|}
	\hline
	$f_1\  (1)$  &$f_2\ (4)$ & $f_3\ (7)$ &  $f_4\ (10)$ &  $f_5\ (13)$ &  $f_6\ (16)$ \\
	\hline
	$f_7\ (2)$  & $f_8\ (5)$ & $f_9\ (8)$ &  $f_{10}\ (11)$ &  $f_{11}\ (14)$ &  $f_{12}\ (17)$\\
	\hline
	$f_{13}\ (3)$  & $f_{14}\ (6)$ & $f_{15}\ (9)$ &  $f_{16}\ (12)$ & $f_{17}\ (15)$ & $f_{18}\ (18)$\\
	\hline
  \end{tabular}
\end{center}
\end{table}
\mbox{}\\
{\bf Step 1:} Since $L_G=4$, for all $(a,b)$ satisfying $1\le h_c(a,b) \le 4$ which are $(1,1), (2,1), (3,1), (1,2)$, 
set $f_1=p_1, f_7=p_4, f_{13}=p_7, f_2=p_2$.\\
{\bf Step 2:} Else,
\begin{itemize}
\item for $(a,b)$ with $b\le 2$, set $f_{b+(a-1)rs}=p_{b+(a-1)s}+ p_{b+as}$. 
\item for $(a,b)$ with $3\le b\le 5$, set $f_{b+(a-1)rs}=p_{b+(a-1)s}+ p_{b+(a-1)s+1}$.
\item for $(a,b)$ with $b=6$, set $f_{b+(a-1)rs}:=p_{b+(a-1)s}+ p_{1+(a-1)s}$.
\end{itemize}

Table~\ref{algo1ex_table} shows the resulting coding scheme $\cC$ under Algorithm~\ref{interencoding_algo}. 
\end{example}

\begin{table}[htb]
\begin{center}
\scriptsize
\begin{tabular}{|c|c|c|c|c|c|}
\hline
 \multicolumn{6}{|c|}{Encodings at the relays} \\
\hline
$p_1$  & $p_2$ & $p_3 + p_4$ &  $p_4 + p_5$ &  $p_5 + p_6$ &  $p_6 +p_1$ \\
\hline
$p_4$  & $p_5 + p_8 $ & $p_6+p_7$ &  $p_7 + p_8$ &  $p_8+ p_9$ &  $p_9+ p_4$\\
\hline
$p_7$  & $p_8+ p_{2} $ & $p_9+ p_{1}$ &  $p_{1}+ p_{2}$ & $p_{2}+ p_{3}$ & $p_{3}+ p_7$\\
\hline
\end{tabular}
\end{center}
\caption{The coding scheme under Algorithm~\ref{interencoding_algo} for 9 packets and redundancy 2 when 
$k=3$ and $L_G=4$ \label{algo1ex_table}}
\end{table}

Algorithm~\ref{interencoding_algo_2} produces optimal schemes over $GF(q)$ for $q$ odd.
Table~\ref{algo2ex_table} shows the coding scheme under Algorithm~\ref{interencoding_algo_2} for 9 packets and redundancy 2.


\begin{algorithm}[htbp]
\caption {: A coding scheme for a system with $n$ packets $p_{1}, p_2, \ldots, p_{n}$ and redundancy $r$.}
\begin{algorithmic}
\FORALL {$1\le i\le nr $ }
\STATE{set $f_{i}:=p_{i}+ p_{i+\lceil \frac{i}{n} \rceil}$.}
\ENDFOR
\RETURN {$\cC=\{f_{1}, f_{2}, \ldots, f_{rn}\}$ as a coding scheme.}
\end{algorithmic}
\label{interencoding_algo_2}
\end{algorithm}

\begin{table}[htb]
\begin{center}
\scriptsize
\begin{tabular}{|c|c|c|c|c|c|}
\hline
 \multicolumn{6}{|c|}{Encodings at the relays} \\
\hline
$p_1 + p_2$  & $p_2 + p_3$ & $p_3 + p_4$ &  $p_4 + p_5$ &  $p_5 + p_6$ &  $p_6 + p_7$ \\
\hline
$p_7 + p_8$  & $p_8 + p_9$ & $p_9 + p_1$ &  $p_1 + p_3$ &  $p_2 + p_4$ &  $p_3 + p_5$ \\
\hline
$p_4 + p_6$  & $p_5+  p_7$ & $p_6 + p_8$ &  $p_7 + p_9$ &  $p_8 + p_1$ &  $p_9 + p_2$ \\
\hline
\end{tabular}
\end{center}
\caption{The coding scheme under Algorithm~\ref{interencoding_algo_2} for 9 packets and redundancy 2 \label{algo2ex_table}}
\end{table}

Recall from Subsection~\ref{b_G-even} that 
$b_G\le 2r-1$ for $GF(q)$ with $q$ even.
The following proposition 
shows that it is indeed possible to generate 
a graph $G$ for which $b_G= 2r-1$ holds based on Algorithm~\ref{interencoding_algo}.

\begin{proposition}\label{propalg1}
Let $\cC$ be the coding scheme for a system sending $n$ packets defined as in Algorithm~\ref{interencoding_algo}, where the integers
$k$ and $r$ satisfy $k,r \geq 2$.
 Let $s=n/k$ and let the graph representation $G$ of $\cC$ satisfy
$L_G\geq 2r-1$. If $k\leq L_G\leq (s-1)k$ then it holds that
$b_G=\delta _{I}(G)=2r-1$.
\label{match_prop}
\end{proposition}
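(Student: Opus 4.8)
The plan is to establish the two equalities $\delta_I(G)=2r-1$ and $b_G=2r-1$ separately, working throughout in the $q$-even regime, so that by Theorem~\ref{q2_thm} a subgraph is decodable exactly when every connected component carries a loop. First I would unwind Algorithm~\ref{interencoding_algo} into a clean combinatorial picture: write $n=sk$, partition the vertices into $k$ consecutive blocks $B_1,\dots,B_k$ of $s$ vertices each, call $p_{(a-1)s+t}$ ``position $t$ of $B_a$'' for $1\le t\le s$, and write $L_G=ck+d$ with $0\le d\le k-1$, so that the hypothesis $k\le L_G\le(s-1)k$ forces $1\le c\le s-1$ (with $c=s-1$ only when $d=0$). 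Reading off the algorithm, the edges of $G$ split into: the \emph{loops}, sitting exactly at position $t$ of $B_a$ for $t\le c$ and at position $c+1$ of $B_a$ for $a\le d$ (here $L_G\le(s-1)k$ is precisely what prevents an entry with $b\ge s$ from becoming a loop); the \emph{type-A edges} from entries with $s\le b\le rs-1$, which a one-line count shows form exactly $r-1$ parallel copies of the Hamiltonian cycle $C_n\colon 1,2,\dots,n,1$; the \emph{type-B edges} from entries with $b=rs$, namely $\{\,\text{pos }1\text{ of }B_a,\ \text{pos }s\text{ of }B_{a+r-1}\,\}$ for $1\le a\le k$; and the surviving \emph{type-C edges} $\{\,\text{pos }t\text{ of }B_a,\ \text{pos }t\text{ of }B_{a+1}\,\}$, one for each entry $(a,t)$ with $t\le s-1$ that is not a loop.

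For $\delta_I(G)=2r-1$, I would note that every vertex has degree $2(r-1)$ inside $(r-1)C_n$, and then do a position-by-position bookkeeping of the other three families — using $L_G\ge k$, which forces every column-$1$ entry to be a loop — to conclude that a position-$s$ vertex has incidence degree exactly $2(r-1)+1=2r-1$, a position-$1$ vertex has $2r$, and every other vertex has at least $2r-1$. Hence $\delta_I(G)=2r-1$, and deleting all edges at a vertex of minimum incidence degree gives $b_G\le\delta_I(G)=2r-1$.

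The substantive direction is $b_G\ge 2r-1$, i.e.\ that deleting any set $S$ of $2r-2$ edges leaves $G-S$ decodable. If $G-S$ is connected then $L_{G-S}\ge L_G-(2r-2)\ge 1$ — this is the one point where $L_G\ge 2r-1$ is used — so $G-S$ is decodable. If $G-S$ is disconnected, let $A$ be one of its components. Then all $G$-edges crossing the cut $(A,V\setminus A)$ lie in $S$; since $(r-1)C_n\subseteq G$ and a cut in which one side is a union of $j$ maximal arcs of $C_n$ is crossed by $2j(r-1)$ type-A edges, we get $2j(r-1)\le 2r-2$, hence $j=1$, i.e.\ $A$ is an arc of $C_n$. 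But then the cut is crossed by exactly $2(r-1)=|S|$ type-A edges and by nothing else, so no type-B edge and no surviving type-C edge crosses $A$. It remains to show each of $A$ and $V\setminus A$ contains a loop vertex. Suppose instead that one of them, say $A$, is loop-free. Since $L_G\ge k$ puts a loop at position $1$ of every block, the loop-free vertices are the ``tails'' of the blocks separated by nonempty loop-prefixes, so the loop-free arc $A$ sits inside a single block $B_a$ among positions $\ge c+1$. Then position $s$ of $B_a\notin A$: otherwise the type-B edge there runs to position $1$ of $B_{a-r+1}$, which lies at position $1<c+1$ and so is not in $A$, whence that edge crosses $A$ — and this already disposes of the case $c=s-1$. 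Assuming $c\le s-2$, no vertex at a position $t\ge c+2$ of $B_a$ lies in $A$ either: its right type-C edge has $h_c(a,t)\ge 1+(c+1)k>ck+d=L_G$, hence survives, and runs to $B_{a+1}\ne B_a$, so it crosses $A$. Thus $A$ is the single vertex at position $c+1$ of $B_a$, and loop-freeness forces $a>d$; but then $h_c(a,c+1)=a+ck>ck+d=L_G$, so the right type-C edge at this vertex survives and crosses $A$ — a contradiction. Hence $G-S$ is decodable for every $(2r-2)$-set $S$, so $b_G\ge 2r-1$; combined with $b_G\le\delta_I(G)=2r-1$ this gives $b_G=\delta_I(G)=2r-1$.

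I expect the last paragraph to be the main obstacle. One cannot shortcut it via Remark~\ref{minloop_rem} by claiming $\lambda(G)\ge 2r-1$, because $\lambda(G)$ can genuinely equal $2r-2$: an interior vertex of a block whose two flanking type-C entries are both loops has incidence degree $2r-1$ yet carries only one loop, so isolating it costs just $2r-2$ deletions. What rescues the bound is that every cut of size $2r-2$ still leaves a loop on each side, and verifying this is exactly what forces the use of both ends of the window $k\le L_G\le(s-1)k$: the lower end keeps a loop at the front of every block, the upper end keeps the type-A Hamiltonian cycle (and all surviving type-C edges) intact, and both are needed to drive the contradiction.
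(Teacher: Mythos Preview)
Your proof is correct and follows essentially the same strategy as the paper's: both identify the $(r-1)$-fold Hamiltonian cycle from the entries $s\le b\le rs-1$, use it to force any disconnecting $(2r-2)$-set to be exactly two multi-edges of $C_n$, and then show the loop-free arc carries a surviving non-cycle edge (your type-B/type-C) out to the other side. Your write-up is more explicit---introducing the block decomposition $L_G=ck+d$ and eliminating positions $s$, then $\ge c+2$, then $c+1$---whereas the paper splits only on whether the loop-free piece contains a vertex $\equiv 0\pmod s$; but this is a presentational difference, not a different argument.
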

\begin{proof} 
Let $t=sr$. 
Observe that the graph representation $G$ of $\cC$ under Algorithm~\ref{interencoding_algo} satisfies 
the following properties.
\begin{enumerate}
\item Any $(a,1)$ with $1\le a \le k$ satisfies $h_c(a,1)=a\le k\le L_G$.  
Since $n=sk$ by assumption, 
each vertex $p_i$ with $i \equiv 1 \pmod s$ has a loop.
\item Any $(a,b)$ such that $b\ge s$ satisfies $h_c(a,b)\ge a+(s-1)k > L_G$, 
and therefore, 
 lines 7-8 in the algorithm are applied to all pairs $(a,b)$ with $(s-1)k \le b \le rs-1$. 
Since there are in total $sk(r-1)=n(r-1)$ of such pairs,  
for each $1\leq i \leq n$, the number of edges between vertices $p_i$ and $p_{i+1}$ is $r-1$.
\item From 2), $G$ has a connected subgraph consisting of multi-edges $(p_i, p_{i+1})$, $1\le i \le n$, and  hence, $G$ itself is connected.  
Furthermore, $G$ cannot be disconnected without deleting the multi-edges $(p_i,p_{i+1})$ and $(p_{j},p_{j+1})$ for some pair $(i,j)$ with $i\not =j$.
\item If a vertex $p_i$ does not have a loop, then
\begin{itemize}
\item it is adjacent to vertex $p_{i+s}$ when $i\not\equiv 0 \pmod s$
from lines 5-6 in the algorithm. 
\item it is adjacent to vertex $p_{i-t+1}$ otherwise
from lines 9-10 in the algorithm. 
\end{itemize}
\end{enumerate}

From the statements above,
it is straightforward to see that $G$ is connected and 
$\delta _{I}(G)=2r-1$ holds, 
which automatically implies $b_G\leq 2r-1$ from 1) in Remark~\ref{minloop_rem}. 
So we focus in the rest of the proof that 
$b_G\geq 2r-1$ also holds. 

Now suppose that for some pair $(i,j)$ with $i\neq j$, the multi-edges $(p_i,p_{i+1})$ and $(p_{j},p_{j+1})$ are deleted from $G$ (so $2(r-1)$ edges are deleted in total),  
and call the resulting graph $\widehat G$.
Denote by $H_i$ the subgraph of $\widehat G$ induced by the vertices $p_{i+1},p_{i+2},\ldots,p_{j}$, 
and by $H_{j}$ the one induced by the vertices $p_{j+1},p_{j+2},\ldots,p_{i}$. 
If both $H_i$ and $H_{j}$ contain loops, then we can conclude that 
$b_G\geq 2r-1$. 
Furthermore, it cannot happen that neither $H_i$ nor $H_{j}$ have loops since 
$L_G=L_{\widehat G}\geq 1$. Therefore, we need only to consider the case
for which $H_i$ contains a loop but $H_{j}$ does not. In this case, 
we will focus on $H_i$ and $H_{j}$ as subgraphs of $\widehat G$, and show the existence of an edge in $E(\widehat G)$ joining them, which implies that $b_G\geq 2r-1$. 

As $H_{j}$ does not contain loops, $|V(H_{j})|<s$ since otherwise,  
at least one of the vertices $p_\ell$ in $H_{j}$ satisfies 
$\ell \equiv 1 \pmod s$, 
and therefore, $H_{j}$ contains a loop from Property 1).

If $H_{j}$ contains a vertex $p_\ell$ with $\ell\equiv 0 \pmod s$, 
$p_\ell$ is adjacent to the vertex $p_{\ell-t+1}$, where $\ell-t+1 \equiv 1 \pmod s$ as $t=sr$. 
Since $p_{\ell-t+1}$ has a loop from Property 1), it is in $H_i$. 
If each vertex $p_\ell$ in $H_{j}$ satisfies $\ell\not\equiv 0 \pmod s$, then $p_\ell$ is adjacent to 
$p_{\ell+s}$. Since $|V(H_{j})|<s$ and $|V(H_{i})|>n-s=(k-1)s\geq s$, $p_{\ell+s}$ is in $H_i$.
In each case, there exists an edge in $E(\widehat G)$ joining $H_i$ and $H_{j}$ as required. 
\end{proof}

Furthermore, we can also prove that when $2r-1\leq L_G \leq 2r$ the graph $G$ in Proposition~\ref{match_prop} satisfies 
$u^G_{b_G}=u^G_{2r-1}= 2r$, which is the 
lower bound of $u^G_{b_G}$ obtained in Lemma~\ref{lemub1}, as stated below.

\begin{proposition}
Let $G$ be the graph satisfying the conditions described in Proposition~\ref{match_prop}.
If $k\geq 3, r\geq 2$ and $L_G$ is either $2r-1$ or $2r$, then we have 
$u^G_{b_G}=u^G_{2r-1}= 2r$.
\label{ub_cor}
\end{proposition}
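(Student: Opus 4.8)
The plan is to sandwich $u^G_{b_G}$ between $2r$ from below and $2r$ from above. By Proposition~\ref{match_prop}, $G$ is connected with $n=sk$ vertices and $m=rn$ edges, $b_G=\delta_I(G)=2r-1$, and $L_G\in\{2r-1,2r\}$; its proof moreover pins down the structure of $G$: a spanning ``main cycle'' $M$ consisting of a bundle of $r-1$ parallel edges between each consecutive pair $p_i,p_{i+1}$, together with $L_G$ loops carried by the vertices $p_i$ whose index falls in the first few residue classes modulo $s$ (in the column order of Algorithm~\ref{interencoding_algo}), no vertex carrying more than one loop, plus $n-L_G-k$ ``chord'' edges and $k$ ``wrap'' edges, each of multiplicity one. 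I would first read off that every vertex has incidence degree $2r-1$ or $2r$, so the identity $S_I(G)=2m-L_G$ forces the number $\alpha$ of degree-$(2r-1)$ vertices to equal $L_G$ exactly; and, since $\theta:=b_G(n+1)+n-2m=2r-1$, Lemma~\ref{lemub1} already gives $u^G_{b_G}\ge\theta+1=2r$. So only the bound $u^G_{b_G}\le 2r$ remains.

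To prove this I would take $H=G\setminus D$ with $|D|=2r-1$ undecodable, fix by Theorem~\ref{q2_thm} (the case $q$ even) a loopless component $C$ of $H$ with vertex set $S$, and split on $|S|$. If $|S|=1$ then $C$ is an isolated loopless vertex, so $D$ is the set of all $d_I(v)=2r-1$ edges at a degree-$(2r-1)$ vertex $v$; since no edge of $G$ has multiplicity as large as $2r-1$, distinct $v$ give distinct $D$, so this case accounts for exactly $\alpha=L_G$ undecodable subgraphs. If $S=V$ then $H$ is connected, so being loopless forces $D$ to be the set of all $L_G$ loops; this requires $L_G\le 2r-1$, hence $L_G=2r-1$, and supplies one further subgraph, which is connected and hence new. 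Summing, $u^G_{b_G}=L_G+1=2r$ when $L_G=2r-1$ and $u^G_{b_G}=L_G=2r$ when $L_G=2r$ --- \emph{provided} I can rule out a loopless component $C$ with $2\le|S|\le n-1$, which is the crux.

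For that last point, note that $D$ must contain the edge-cut $\partial_G(S)$ together with every loop incident with $S$, so $|\partial_G(S)|+L_G(S)\le 2r-1$, writing $L_G(S)$ for the number of loops at vertices of $S$. Since $M$ is a spanning cycle of bundles of size $r-1$, if $S$ meets $M$ in $p$ arcs then $|\partial_G(S)|\ge 2p(r-1)$, which for $r\ge 2$ forces $p=1$: $S$ is an interval $\{p_t,\dots,p_{t+j-1}\}$ of $M$, $|\partial_G(S)|=2(r-1)+w$ with $w$ the number of chord and wrap edges leaving $S$, and therefore $w+L_G(S)\le 1$. I would then bring in the explicit loop and chord placement of Algorithm~\ref{interencoding_algo}. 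If $j\ge s$ the interval meets every residue class modulo $s$, hence contains a class-$1$ vertex, which always carries a loop, so $L_G(S)\ge 1$; with $w+L_G(S)\le 1$ this pins down $L_G(S)=1$ and $w=0$, and also $j<2s$ (else $S$ would contain two loopy class-$1$ vertices). But now, for each residue class $c\in\{2,\dots,s-1\}$, $S$ contains a loopless vertex of class $c$, which has a chord edge, and since $w=0$ that chord must stay inside $S$ and so drags a second class-$c$ vertex into $S$; chasing the chord pattern of the algorithm from there produces a contradiction (it forces a loopy vertex, or an impossibly long interval, into $S$). If instead $2\le j\le s-1$, no chord or wrap edge has both ends in the interval, so $w=\sum_{v\in S}\chi(v)$ with $\chi(v):=d_I(v)-2(r-1)-d_L(v)\in\{0,1,2\}$ the number of non-loop edges at $v$ off $M$; as every loopless vertex has $\chi(v)\ge 1$, the inequality $w+L_G(S)\le 1$ again forces $|S|\le 1$, a contradiction.

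The hard part is exactly this last case analysis --- keeping precise track of which $p_i$ carry loops and of where the multiplicity-one chord and wrap edges run under Algorithm~\ref{interencoding_algo}. A delicate sub-point is the degenerate case $k\mid r$, in which the $k$ wrap edges coincide with edges of $M$ (making those bundles have size $r$ instead of $r-1$) and the bookkeeping with $\chi$ must be adjusted; for the parameter ranges in the worked examples, where $r<k$, the argument sketched above goes through without such adjustments.
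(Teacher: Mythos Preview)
Your approach and the paper's are essentially the same strategy: list the $2r$ ``obvious'' minimum D-cuts (isolate a vertex of incidence degree $2r-1$, or delete all $L_G=2r-1$ loops), then rule out any other $|D|=2r-1$ with $G\setminus D$ undecodable by exploiting the main-cycle structure of Algorithm~\ref{interencoding_algo}. The organisation differs. You work directly with the loopless component $C$ and its boundary, obtaining $|\partial_G(S)|+L_G(S)\le 2r-1$, which (via the $(r-1)$-fold main-cycle bundles) forces $S$ to be a single arc and leaves at most one unit for $w+L_G(S)$. The paper instead first argues $H$ must be disconnected, fixes two deleted bundles $(p_i,p_{i+1})$ and $(p_j,p_{j+1})$ to form $\widehat G$, and then does a case analysis on the two induced pieces $H_i,H_j$ (how many loops each carries, whether $|V(H_j)|<s$, with a separate treatment of $r=2$ where a piece can be further disconnected by a single edge). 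Your $\chi(v)$ argument for arcs with $2\le |S|\le s-1$ is clean and arguably tidier than the paper's corresponding discussion.

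The part you label ``the crux'' and hand-wave---the case $s\le |S|<2s$ with $L_G(S)=1$ and $w=0$---is precisely where the paper puts in the detailed work. There it is shown (reusing Property~4 from the proof of Proposition~\ref{match_prop}) that once $|V(H_i)|,|V(H_j)|\ge 2$, either both pieces carry enough loops, or the loop-poor piece has at least two non-main-cycle edges crossing to the other side, so one further deletion cannot yield an undecodable graph; the $r=2$ case is handled separately because a piece can then be split again by a single edge. Your ``chasing the chord pattern'' sketch can be made to work along those same lines (for $s\ge 3$ the last $s$ vertices of the arc contain a loopless class-$c$ vertex, $2\le c\le s-1$, whose chord necessarily exits $S$), but as written it is not a proof. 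Your flagged $k\mid r$ degeneracy---where the wrap edges collapse onto main-cycle bundles---is a genuine boundary case; the paper's argument is phrased in terms of ``the multi-edges $(p_i,p_{i+1})$'' rather than a fixed count of $r-1$, so it absorbs this case, whereas your bookkeeping with $\chi$ would indeed need the adjustment you mention.
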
 
\begin{proof} 
It is straightforward to check from the construction of $G$ 
that the number of vertices with incidence degree $2r-1$ (\emph{resp.} $2r$) 
is $L_G$ (\emph{resp.} $n-L_G$). 
So when $L_G=2r-1$ or $L_G=2r$, $2r$ undecodable subgraphs 
can be found by deleting $2r-1$ edges from $G$ by 
\begin{enumerate}
	\item[(1)] making a vertex $v$ with $d_I(v)=2r-1$ isolated;
	\item[(2)] deleting all $L_G$ loops from $G$ (when $L_G=2r-1$)
\end{enumerate}
We will show that we cannot construct 
other undecodable subgraphs by deleting $2r-1$ edges from $G$ (that is, 
other undecodable subgraphs of $G$ with exactly $nr-2r+1$ edges).

Assume by contradiction that
there exists an undecodable subgraph $H$ of $G$ with $nr-2r+1$ edges constructed by 
neither (1) nor (2) above.
Then $H$ must be disconnected since otherwise, we have to delete all $L_G\geq 2r-1$ loops to make $H$ undecodable. Therefore, from 3) in the proof of Proposition~\ref{propalg1}, 
$2r-2$ edges $(p_i,p_{i+1})$ and  $(p_j,p_{j+1})$, $i\not =j$, must be deleted from $G$ 
to generate $H$.   

As before, let  $\widehat G$ be the subgraph of $G$ found by deleting the multi-edges $(p_i,p_{i+1})$ and $(p_{j},p_{j+1})$ for some $i\neq j$. 
Again, let $H_i$ and $H_{j}$ be the subgraphs of $\widehat G$
induced by the vertices $p_{i+1},p_{i+2},\ldots,p_{j}$,  
and $p_{j+1},p_{j+2},\ldots,p_{i}$, respectively. 

First suppose $r\ge 3$, so that neither $H_i$ nor $H_{j}$ can be disconnected 
by deleting another a single edge.
If $|V(H_i)|=1$, say $V(H_i)=\{ v \}$ for some $v \in V$, 
then $d_I(v)=2r$ in $G$ since the case of $d_I(v)=2r-1$ has been counted in (1). 
As $d_I(v)=2r$ and $2r-2$ edges amongst these $2r$ edges have been deleted already, 
there are two remaining edges attached to $v$. 
Since a non-loop edge attached to $v$ (if it exists) 
is joined with a vertex in $H_{j}$, we can 
conclude that we need to delete two or more edges from $\widehat G$ to make $H$, 
which shows that $|E(H)|<nr-2r+1$. 

So now assume that $|V(H_i)|, |V(H_{j})|\geq 2$. 
If both $H_i$ and $H_{j}$ contain at least two loops, 
it is trivial that at least two edges must be deleted to make $H$ from $\widehat G$. 
If $H_j$ contains no loops, then we can use the same argument in the proof of 
Proposition~\ref{match_prop} to show the existence of at least $|V(H_j)|\geq 2$ edges 
joining $|V(H_i)|$ and $|V(H_{j})|$. 
Furthermore, if $H_j$ contains only one loop 
(in which case $H_{i}$ has at least two loops since $L_G\geq 2r-1\geq 3$), 
then we have  $2\leq|V(H_j)|\leq 2s-1$, since $H_j$ contains 
at least two loops whenever $|V(H_j)|>2s-1$. 
Thus, $|V(H_{i})|\geq n-(2s-1)=s(k-2)+1>s$ as $k\geq 3$ by assumption.
Hence, from 4) in the proof of Proposition~\ref{propalg1} we have there exists 
an edge joining $H_i$ and $H_{j}$.   
For each case, we can conclude that $|E(H)|<nr-2r+1$, and 
hence, it is impossible to make an undecodable graph with $nr-2r+1$ edges 
except for deleting edges according to (1) and (2), as required. 

We next suppose that $r=2$. If neither $H_i$ nor $H_{j}$ can be disconnected 
by deleting another a single edge, then use the same argument above. 
So suppose that $H_j$ can be disconnected by deleting a single edge.
 
As before we can assume that $|V(H_i)|, |V(H_{j})|\geq 2$. 
Now, assume that an edge $(a,a+1)$ is deleted from $H_j$ and 
the resulting graph is disconnected. There are two cases to consider.
In the first case we suppose that $H_i$ contains a loop.
If  $|V(H_j)|<s$, each vertex in $H_j$ 
with no loops is adjacent to some vertex in $H_{j}$. 
If $|V(H_j)|\geq s$, at least one vertex in $H_j$ has a loop, 
which implies that at least one of two components in $H_j$ has a loop. 
If each of the connected components of $H_j$ contains loop, then we are done, as $H_j$ is decodable.
If there exists a component with no loops, then it has at most $s-1$ vertices and  
some of these are adjacent to vertices of $H_{j}$ or the other component with loops. 
For each case, we can conclude that the resulting graph is decodable. 

Now suppose that $H_{i}$ has no loops (so all loops are in $H_j$). Then $|V(H_{i})|<s$. 
Also each vertex in $H_{i}$ is adjacent to 
some vertex in $H_j$. If both of two components in $H_j$ contain loops, or if $H_{i}$ contains two vertices such that one vertex is joining to 
one component in $H_j$ and the other vertex is joining to the other component in $H_j$, then we are done.
If each vertex in $H_{i}$ is adjacent to vertices in a component in $H_j$ with no loops, then the size of the component is also bounded by $s-1$. 
Thus, $n$ vertices are partitioned into 3 parts in the resulting graph, each of which consists of consecutive integers, so that
2 parts contain $s-1$ or fewer vertices all of which are not equivalent to 1 modulo $s$, yielding a contradiction.
\end{proof}

We have the following proposition, for Algorithm~\ref{interencoding_algo_2}.　 

\begin{proposition}
Let $\cC$ be the coding scheme for a system sending $n$ packets and redundancy $r$, 
where $n>3$, 
defined as in Algorithm~\ref{interencoding_algo_2}. Suppose that $n/2<4r<n$. 
Then the representation $G=G_{\cC}$ of 
$\cC$ satisfies
$b_G=\delta _{I}(G)=2r$, and 
furthermore, that 
$u^G_{b_G}=u^G_{2r}= n$.
\label{algo2_prop}
\end{proposition}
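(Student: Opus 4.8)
The plan is to make the graph $G=G_\cC$ completely explicit and then reduce everything to two combinatorial facts about it. Unwinding Algorithm~\ref{interencoding_algo_2}: for $i$ in the $\ell$-th block $\{(\ell-1)n+1,\dots,\ell n\}$ one has $\lceil i/n\rceil=\ell$, so $f_i=p_i+p_{i+\ell}$, and as $i$ ranges over that block this contributes precisely the $n$ edges $(p_j,p_{j+\ell})$, $j\in\Z_n$. Hence $G$ is the circulant graph on $\Z_n$ with connection set $\{\pm1,\dots,\pm r\}$ (the $r$-th power of the $n$-cycle). The hypothesis $4r<n$ gives $2r<n/2$, so $G$ has no loops and no repeated edges, is $2r$-regular, and (we may assume $r\ge2$) contains the triangle $\{p_i,p_{i+1},p_{i+2}\}$ for every $i$. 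In particular $G$ is connected and, by Theorem~\ref{q2_thm}, decodable, so $G\in\cD(n,rn)$ and $\delta_I(G)=2r$; the bound $b_G\le\delta_I(G)=2r$ is then immediate from Lemma~\ref{dcut_lem}(i) (equivalently Lemma~\ref{edgenumber_lem}).

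First I would record a triangle-covering bound: among the $n$ triangles $T_i=\{p_i,p_{i+1},p_{i+2}\}$, an edge of $G$ at distance $1$ lies in exactly two of them, an edge at distance $2$ in exactly one, and an edge at distance $\ge3$ in none; so every edge of $G$ lies in at most two of the $T_i$. Consequently any set of edges whose deletion makes $G$ bipartite must meet all $n$ of the $T_i$ and therefore has size at least $\lceil n/2\rceil$, which exceeds $2r$ since $n>4r$. Second, I would establish an edge-isoperimetric estimate: among all $a$-subsets of $\Z_n$, a set of $a$ consecutive elements maximizes the number of internal edges of $G$ (a routine compression argument), and computing this maximum shows that for $r\ge2$ every $W$ with $2\le|W|\le n-2$ satisfies $|\partial_G(W)|\ge 2r+2$, whereas $|\partial_G(\{v\})|=|\partial_G(V\setminus\{v\})|=2r$. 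In particular $\lambda(G)=2r$ and the only edge cuts of $G$ of size $2r$ are those isolating a single vertex.

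Given these, $b_G\ge2r$ follows: if $S\subseteq E(G)$ with $|S|=2r-1$, then $G-S$ is connected by the isoperimetric estimate and still contains some $T_i$ intact (the $2r-1$ deleted edges lie in at most $2(2r-1)=4r-2<n$ of the $T_i$), so by Theorem~\ref{q2_thm} it is decodable; hence $b_G=2r$. For $u^G_{2r}=n$: the inequality $u^G_{2r}\ge n$ is exactly Lemma~\ref{upper_lem} (applicable since $G\in\cD(n,rn)$, $\delta_I(G)=2r$, $n\ge4$, $r\ge2$). For the reverse, suppose $|S|=2r$ and $G-S$ is undecodable. If $G-S$ is connected it is non-bipartite by the triangle-covering bound, hence decodable — a contradiction; so $G-S$ is disconnected, and the vertex set $W$ of one of its components satisfies $\partial_G(W)\subseteq S$, whence $2r=|S|\ge|\partial_G(W)|\ge2r$, forcing $S=\partial_G(W)$ with no internal edge deleted. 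If $2\le|W|\le n-2$ then $|\partial_G(W)|\ge2r+2$, impossible; so $|W|\in\{1,n-1\}$, i.e. $S$ is the set of the $2r$ edges incident with a single vertex. There are exactly $n$ such choices, so $u^G_{2r}\le n$, and equality holds.

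The step I expect to be the main obstacle is the edge-isoperimetric estimate — i.e. showing that deleting fewer than $2r$ edges cannot disconnect $G$ and that no $2r$-edge cut other than a vertex-isolating one exists. Once this (together with the easy triangle-covering bound) is in place, the remainder is bookkeeping built on Theorem~\ref{q2_thm} and Lemma~\ref{upper_lem}.
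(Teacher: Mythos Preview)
Your argument is correct and follows the same two–part strategy as the paper: kill the ``bipartite'' possibility by showing that at least $\lceil n/2\rceil>2r$ edges must be removed to destroy all of the $n$ triangles $T_i$, and handle connectivity by a cut–size estimate. Both you and the paper tacitly need $r\ge 2$ for the triangle argument (for $r=1$ the statement is in fact false, e.g.\ $n=5$), so your parenthetical assumption is warranted.

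The only substantive difference is in the connectivity step. The paper bounds the cut–set of a part $V_1$ of size $t>1$ by $2rt-\binom{t}{2}$ and uses the hypothesis $n/2<4r$ to force $(t-1)(4r-t)>0$; note however that the correct relation is $|\text{cut}|=2rt-2e(V_1)\ge 2rt-2\binom{t}{2}$, under which the same manipulation only gives $(t-1)(2r-t)>0$, so the paper's computation has a slip. You instead invoke the edge–isoperimetric inequality for powers of cycles (arcs are extremal) to get $|\partial_G(W)|\ge 4r-2\ge 2r+2$ for $2\le|W|\le n-2$. This is the right fix: it simultaneously yields $\lambda(G)=2r$, pins down the equality cases needed for $u^G_{2r}\le n$, and in fact renders the hypothesis $n/2<4r$ unnecessary. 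Your caution that this isoperimetric step is the main obstacle is well placed: the compression/shifting argument for circulants with consecutive connection set is standard but not a one–liner, so you should expect to spell it out rather than cite it as routine.
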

\begin{proof} 
Observe that each vertex $p_i$ of $G$ is adjacent to $2r$ vertices
$p_{i\pm k}$, $1\leq k \leq r$, and hence, 
$d_I(p_i)=2r$ for each $i$.
We claim that except for the 
case of deleting $2r$ edges incident to a vertex, deleting $2r$ or fewer edges results in a decodable subgraph.

We first show that deleting $2r$ or fewer edges rather than $2r$ incident edges to a vertex 
always gives us a connected graph. Consider a cut of vertex set $V$ of $G$; so suppose that $V$ is 
partitioned into two sets $V_1$ and $V_2$. Let $|V_1|=t>1$ and assume 
without loss of generality that $t \le n/2 $ since 
the case of $t > n/2$ implies that $|V_2|=n-t \le n/2$ 
 and the 
argument below follows by replacing $V_1$ with $V_2$. 

Since there are no multi-edges in $G$,
at most $\binom{t}{2}$ edges are used to connect vertices in $V_1$, and 
therefore, the size of the cut-set is at least $2rt-\binom{t}{2}$.  
From the following inequality 
\begin{eqnarray*}
& &2rt-\binom{t}{2}>2r\\
&\Leftrightarrow&(t-1)(4r-t)>0 
\end{eqnarray*}
we have that the size of the cut-set is bigger than $2r$
as long as $1<t<4r$. 
Therefore, we can conclude that the resulting graph is connected
whenever $n/2<4r$.

We next show that any resulting graph after deletion of edges has an odd cycle. 
Consider a subgraph $\tilde G$ of $G$ consisting of $n$ triangles 
(cycles of length 3) $\tau_i: p_i, p_{i+1}, p_{i-1}, p_i$ with $1\le i\le n$. 
It is trivial that $\tau_i$ and $\tau_{i'}$  are edge-disjoint (\emph{i.e.}, they do not share the same edges) if $i$ and $i'$ are both even, and hence, there exist  $\lfloor n/2 \rfloor$  edge-disjoint triangles in $\tilde G$. 
Thus, at least $\lfloor n/2 \rfloor$ edges should be deleted from $\tilde G$ to make a 
decodable graph.

If $n$ is even, then  $n/2 = \lfloor n/2 \rfloor = \lceil n/2 \rceil$. 
If $n$ is odd, then $\lfloor n/2 \rfloor = (n-1)/2$. 
However, since each edge is used within at most 2 triangles,
at least another edge must be deleted to remove all $n$ triangles, which results in a total of $(n-1)/2+1 = (n+1)/2 = \lceil n/2 \rceil$ edges. 

It follows that for any $n$, at least $\lceil n/2 \rceil$ edges must be deleted if the resulting graph is undecodable.
Then as long as $n/2 > 2r$, the result holds.
 \end{proof}

\section{Comparison of Various Coding Schemes \label{comparison}}
In this section, we will analyze robustness against packet loss 
for the coding scheme given by Algorithm~\ref{interencoding_algo} and Algorithm~\ref{interencoding_algo_2}. 
We compare each coding scheme over $GF(q)$ with $q$ even and with $q$ odd by computing its 
decoding probability.

Throughout this section, set the number of packets $n=9$ and redundancy $r=2$. 
We consider the following 12 graphs  for comparison. 
\begin{itemize}
\item $G_0, G_1, \ldots, G_9$ are 
the graph representations of coding schemes obtained 
by Algorithm~\ref{interencoding_algo} with divisor $k=3$,
where we set the number of uncoded packets (\emph{i.e.} the number of loops) to be $0,1, \ldots, 9$, respectively.  
\item  $G'$ is the graph representation of a coding scheme 
obtained by Algorithm~\ref{interencoding_algo_2}.
\item $G$ be the graph corresponding to transmitting packets without coding (\emph{i.e.}, each packet is sent twice to the terminal without coding). 
\end{itemize}

Table~\ref{bH_table} presents $b_H$ for each graph $H$ for $GF(q)$ with $q$ even and with $q$ odd.  
\begin{table}[htpb]
\begin{center}
\scriptsize
\begin{tabular}{|c||c|c|}               
\hline
 Graph  $H$     &   $b_H$ with $q$ even &  $b_H$ with  $q$ odd \\
\hline    
$G_0$ &  0 & 3  \\
\hline        
$G_1$ &  1 & 3  \\
\hline
$G_2$ &  2 & 3  \\
\hline
$G_3$ &  3 & 3  \\
\hline
$G_4$ &  3 & 3  \\
\hline
$G_5$ &  3 & 3  \\
\hline
$G_6$ &  3 & 3  \\
\hline
$G_7$ &  3 & 3  \\
\hline
$G_8$ &  3 & 3  \\
\hline
$G_9$ &  3 & 3  \\
\hline
$G'$ & 0 & 4 \\
\hline
$G$   &  2 & 2 \\
\hline
\end{tabular}
\end{center}
\caption{$b_H$ for each graph $H$}
\label{bH_table}
\end{table}

Tables~\ref{dp_table_0.6}, \ref{dp_table_0.7} and \ref{dp_table_0.8} provide the  
decoding probabilities $\cP_H(p,1-p)$ for each graph $H$ when $p=0.6$, $p=0.7$ and $p=0.8$, respectively. Recall that $p$ is the probability that each packet is successfully transmitted, 
and graphs with higher decoding probabilities are preferred. 

\begin{table}[!h]
\begin{center}
\scriptsize
\begin{tabular}{|c||c|c|}               
\hline
 Graph      &   Decoding Prob. with $q$ even &   Decoding Prob. with $q$ odd \\
\hline    
$G_0$ &  0        &  0.623651 \\
\hline        
$G_1$ &  0.405309 &  0.639395 \\
\hline
$G_2$ &  0.554459 &  0.646345  \\
\hline
$G_3$ &  \color{red}{0.607826} &  0.644499  \\
\hline
$G_4$ &  0.605533 &  0.617078  \\
\hline
$G_5$ &  0.587424 &  0.592104  \\
\hline
$G_6$ &  0.567454 &  0.569576 \\
\hline
$G_7$ &  0.541135 &  0.542086  \\
\hline
$G_8$ &  0.515860 &  0.516201  \\
\hline
$G_9$ &  0.491854 &  0.491856 \\
\hline
$G'$ & 0 &  \color{blue}{0.703957} \\
\hline
$G$   &  0.208216 &  0.208216 \\
\hline
\end{tabular}
\end{center}
\caption{The decoding probabilities when $p=0.6$ \label{dp_table_0.6}}
\end{table}

\begin{table}[!h]
\begin{center}
\scriptsize
\begin{tabular}{|c||c|c|}               
\hline
 Graph      &   Decoding Prob. with $q$ even &   Decoding Prob. with $q$ odd \\
\hline    
$G_0$ &  0        &  0.847225 \\
\hline        
$G_1$ &  0.608349 &  0.856136 \\
\hline
$G_2$ &  0.784263 &  0.859695  \\
\hline
$G_3$ &  \color{red}{0.834280} &  0.857901  \\
\hline
$G_4$ &  0.832605 &  0.838849  \\
\hline
$G_5$ &  0.818904 &  0.821020  \\
\hline
$G_6$ &  0.803590 &  0.804415 \\
\hline
$G_7$ &  0.784373 &  0.784757  \\
\hline
$G_8$ &  0.765603 &  0.765757  \\
\hline
$G_9$ &  0.747395 &  0.747396 \\
\hline
$G'$ & 0 &  \color{blue}{0.902653} \\
\hline
$G$   &  0.427930 &  0.427930 \\
\hline
\end{tabular}
\end{center}
\caption{The decoding probabilities when $p=0.7$ \label{dp_table_0.7}}
\end{table}

\begin{table}[!h]
\begin{center}
\scriptsize
\begin{tabular}{|c||c|c|}               
\hline
 Graph      &   Decoding Prob. with $q$ even &   Decoding Prob. with $q$ odd \\
\hline    
$G_0$ &  0        &  0.961702 \\
\hline        
$G_1$ &  0.772801 &  0.964009 \\
\hline
$G_2$ &  0.925741 &  0.964775  \\
\hline
$G_3$ &  \color{red}{0.955810} &  0.963999  \\
\hline
$G_4$ &  0.955182 &   0.956759\\
\hline
$G_5$ &  0.949392 &  0.949776  \\
\hline
$G_6$ &  0.942937 &  0.943050 \\
\hline
$G_7$ &  0.935311 &  0.935368  \\
\hline
$G_8$ &  0.927755 &  0.927783  \\
\hline
$G_9$ &  0.920293 &  0.920293 \\
\hline
$G'$ & 0 &  \color{blue}{0.982510} \\
\hline
$G$   &  0.692534 &  0.692534 \\
\hline
\end{tabular}
\end{center}
\caption{The decoding probabilities when $p=0.8$ \label{dp_table_0.8}}
\end{table}

Observe that the decoding probability of a coding scheme over $GF(q)$ with $q$ odd is higher
that over $GF(q)$ with $q$ even;  
for coding schemes over fields of even characteristic decoding relies 
solely on the presence of loops in each connected component,
whilst over fields of odd characteristic any odd cycle in each component will suffice. 
Furthermore, we can also confirm that the decoding probability of $H$ increases as $b_H$ gets larger.

More importantly, $G_3$ (a graph with $3$ loops) and $G'$ give the highest decoding probability amongst those 12 coding schemes over $GF(q)$ with $q$ even and  with $q$ odd, respectively,
which supports the results provided in this paper.
Figures~\ref{dpeven_fig} and \ref{dpodd_fig} show the  
decoding probabilities $\cP_H(p,1-p)$ for $G_3, G'$ and $G$ over $GF(q)$ with $q$ even with $q$ odd,  
respectively, for comparison. From these figures, we can see that decoding probability can be increased by up to 
0.42 (at $p=0.65$) for $q$ even and 0.51 (at $p=0.64$) for $q$ odd,
by using the proposed low-power coding schemes. 

\begin{figure}[htb]
\begin{center}
\includegraphics[width=85mm]{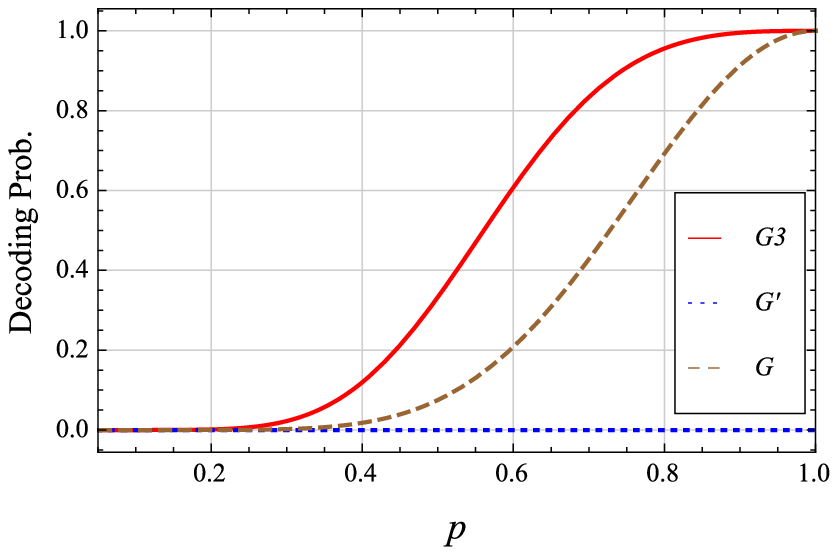}
\caption
{The decoding probabilities of $G_3, G'$ and $G$ over $GF(q)$ with $q$ even. \label{dpeven_fig}}
\end{center}
\end{figure}

\begin{figure}[htb]
\begin{center}
\includegraphics[width=85mm]{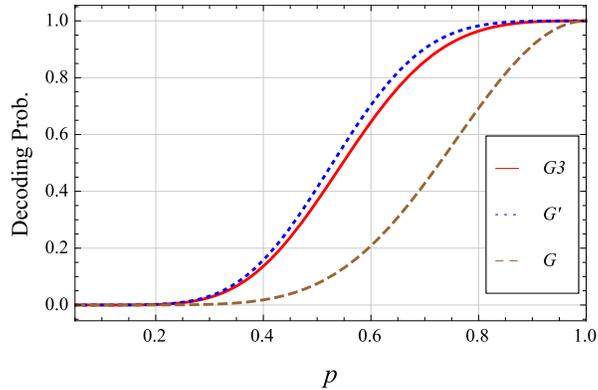}
\caption
{The decoding probabilities of $G_3, G'$ and $G$ over $GF(q)$ with $q$ odd. \label{dpodd_fig}}
\end{center}
\end{figure}

\section{Final Remarks \label{conclusion}}

In this paper, we theoretically analyzed the robustness of 
coding schemes against packet loss, using only additions of two packets over $GF(q)$.
Such coding schemes 
are well-suited to systems for which 
data reliability and low computational complexity are 
strongly preferred (\emph{e.g.} WBANs since 
they require little energy for encoding and decoding). 
We introduced some criteria for a coding scheme to have 
high decoding probability using graph theory.
We also compared decoding probabilities of different schemes. Our results suggest that
coding schemes defined over $GF(q)$ with $q$ odd may outperform their 
counterparts defined over $GF(q)$ with $q$ even.

The results here are related to problems such as the number of bipartite subgraphs of a graph and the size of a maximal cut of a graph. 
We remark that for linear functions $f_i(p_1,p_2, \ldots, p_n)=\sum_{j \in J}p_{i_j}$ with $J \subset \{1,2, \ldots, n\}$ of size greater than 2,
a hypergraph representation can apply. Then as in Lemma \ref{lemdh}, the scheme is decodable if and only if its incidence matrix has full rank $n$.

\bibliographystyle{ieeetr}
\bibliography{wban}

\end{document}